\definecolor{blue1}{rgb}{0.1,0.3,0.6}
\definecolor{blue2}{rgb}{0.15,0.35,0.65}
\definecolor{blue3}{rgb}{0.2,0.4,0.7}
\definecolor{blue4}{rgb}{0.3,0.5,0.8}
\definecolor{darkgrey1}{rgb}{0.05,0.05,0.05}
\definecolor{grey1}{rgb}{0.45,0.45,0.45}
\definecolor{red1}{rgb}{0.85,0.1,0.1}
\newcommand{\LL}{\mathcal{L}}
\newcommand{\rhobar}{\overline{\rho}}
\let\oldbibliography\thebibliography
\renewcommand{\thebibliography}[1]{%
  \oldbibliography{#1}%
  \setlength{\itemsep}{0pt}%
}
\newtheorem{proposition}{Proposition}[section]
\begin{document}

\color{darkgrey1}

\title{Acceleration of convergence to equilibrium in Markov chains by breaking detailed balance%\thanks{Grants or other notes
%about the article that should go on the front page should be
%placed here. General acknowledgments should be placed at the end of the article.}
}
%
%\titlerunning{Accelerated convergence by breaking detailed balance}        % if too long for running head

\author[1]{Marcus Kaiser}
\author[2]{Robert L. Jack}
\author[1]{Johannes Zimmer}
\affil[1]{Department of Mathematical Sciences, University of Bath}
\affil[2]{Department of Physics, University of Bath}

\maketitle

%% begin paste

\begin{abstract}
  We analyse and interpret the effects of breaking detailed balance on the convergence to equilibrium of conservative interacting particle
  systems and their hydrodynamic scaling limits.  For finite systems of interacting particles, we review existing results showing
  that irreversible processes converge faster to their steady state than reversible ones.  We show how this behaviour appears in
  the hydrodynamic limit of such processes, as described by macroscopic fluctuation theory, and we provide a quantitative
  expression for the acceleration of convergence in this setting.  We give a geometrical interpretation of this acceleration, in
  terms of currents that are \emph{antisymmetric} under time-reversal and orthogonal to the free energy gradient, which act to
  drive the system away from states where (reversible) gradient-descent dynamics result in slow convergence to equilibrium.
\end{abstract}

\section{Introduction}
\label{sec:Introduction}

{In this paper we analyse the effects of breaking detailed balance for interacting particle systems (as described by Markov
processes~\cite{Liggett2005a}), and their hydrodynamic scaling limits (as described by Macroscopic Fluctuation Theory~\cite{Bertini2015a}).}
%Hence, we consider two families of model systems: the interacting particle
%systems~\cite{Liggett2005a} are described by Markov chains, and their associated hydrodynamic theories fall within the framework of macroscopic
%fluctuation theory~\cite{Bertini2015a}.}  
The interacting particle systems represent microscopic descriptions of physical systems,
in which the motion of each particle may be followed individually.  The (fluctuating) hydrodynamic model of the same system
describes its behaviour on large length and time scales, in which case the motion of the individual particles is no longer
visible, and one works instead with a smooth density field, whose time evolution includes a deterministic element as well as a
(weak) stochastic noise~\cite{Kipnis1999a}.

Among interacting particle systems, those with detailed balance are special -- they correspond to Markov chains that are
reversible with respect to an invariant measure $\pi$.  Physically, these models are important because their steady states are
time-reversal symmetric and lack any persistent currents, so they can be used to describe systems that relax to states of thermal
equilibrium.  They also have applications outside physics, because given a (possibly non-normalised) measure $\nu$, it is
straightforward to design a reversible Markov chain whose invariant measure $\pi$ is proportional to $\nu$.  This construction is
at the root of many Markov chain Monte Carlo (MCMC) methods~\cite{Asmussen2007a,Newman1999a}, in which one typically aims to
generate large numbers of uncorrelated samples from a prescribed distribution $\pi$.  Such methods have widespread applications
including Bayesian learning, protein folding and cryptography~\cite{Diaconis2009a}.

In both the physical systems and the MCMC methods, an important question is the rate of convergence to equilibrium of the relevant
Markov chains.  In MCMC, this rate controls the computational cost required to obtain independent samples from $\pi$, which is an
important factor in the efficiency of the method.  In the physical systems, the question of how fast a system converges to
equilibrium controls many physical properties including fluid viscosities, and systems' abilities to response to changes in
external conditions, such as temperature.

Recently, several results have become available which show that for a given invariant measure $\pi$, reversible Markov chains have
the slowest convergence~\cite{Hwang2005a,Sun2010a,Lelievre2013a,Bierkens2015a,Rey-Bellet2015a,Rey-Bellet2016a}.  Given that most
common MCMC methods are based on such reversible models, and that faster convergence is linked to improved efficiency, this
observation offers a route towards the development of new and more efficient methods, some of which are already becoming
available~\cite{Bernard2009a}. Breaking reversibility can be achieved by an explicit modification of transition
rates~\cite{Rey-Bellet2016a}, or by an expansion of the state space (\emph{lifting}) to incorporate persistence of motion or
inertial effects~\cite{Chen1999a,Diaconis2000a}. The main physical feature of the resulting irreversible Markov chains is that
they (generically) have non-equilibrium steady states characterised by finite entropy production and dissipation energy.  Compared
to the equilibrium setting, the nature of fluctuations and convergence to steady state in non-equilibrium systems is much less
understood, and is an area of important current activity~\cite{Derrida2007a,Baiesi2009a,Bertini2015a}.

To address these questions, this paper presents several new results.  First, we revisit existing results for microscopic models,
concentrating in particular on the spectral gap of the generator, and how it is affected when detailed balance is broken.  Second,
we investigate how breaking detailed balance affects the hydrodynamic limit of the model -- in this latter case, convergence to
equilibrium is most easily analysed via large deviations of the empirical measure~\cite{Rey-Bellet2015a,Rey-Bellet2016a}.  Third,
we illustrate our general results by numerical results of a simple interacting particle system -- the zero-range
process~\cite{Spitzer1970a}.  These numerical results are particularly relevant since the analytical results indicate that
breaking detailed balance can never slow down convergence to equilibrium, but they provide rather little insight into how much
this convergence can be accelerated, nor how this effect depends on the specific way in which detailed balance is broken.  We
provide some general remarks and comments in this direction.

\subsection{Characterisation of convergence to steady state}
\label{sec:Char-conv-to}

A number of methods are available to analyse the time required for a system to reach its steady state.  This section contains a
brief review of some of them.  For microscopic models -- such as Markov processes on (finite) discrete spaces and SDEs -- we
mention some recent work showing how breaking detailed balance can accelerate convergence of systems to their steady states.
These results serve as a foundation for our results here, which show how these effects manifest on the macroscopic scale.

\subsubsection{Spectral gap}
\label{sec:Spectral-gap}

The first -- and most common -- method for analysis of convergence to equilibrium is to estimate the spectral gap of the generator
of the relevant stochastic process.  {In general, the eigenvalues $\{\lambda_i\}$ of the generator are complex numbers, there is a simple eigenvalue $\lambda_0=0$ and all other eigenvalues have negative real parts.  The spectral gap $\alpha_{\rm min}$ is the minimal value of $|\lambda_i^r|$ among the non-zero eigenvalues, where $\lambda_i^r$ denotes the real part of $\lambda_i$.
%
%{Considering the eigenvalue with largest non-zero real part, the spectral gap is
%  defined as the magnitude of the real part of this eigenvalue.}
%
 Roughly speaking, the physical significance of the spectral gap} is that the system converges
exponentially fast to its steady state, with a characteristic time scale
\begin{equation}
  \label{equ:def-taug}
  \tau_{\rm g}=(1/\alpha_{\rm min}).
\end{equation}  
For stochastic differential equations~\cite{Hwang2005a,Lelievre2013a} and discrete-space Markov processes~\cite{Rey-Bellet2016a},
it has been shown that irreversible processes generically have smaller time scales $\tau_{\rm g}$, compared to reversible
processes with the same invariant measure.  We provide further results in this direction in Sec.~\ref{sec:spec-gap} below, for the
discrete space Markov processes that are relevant for interacting particle systems.

\subsubsection{Asymptotic variance}
\label{sec:Asymptotic-variance}

Another set of methods for the analysis of the convergence to steady state is based on empirical time averages.  That is, let
$X_t$ be the state of the system at time $t$ and let $f$ be an observable quantity (test function) whose value at time $t$ is
$f(X_t)$.  Then the empirical time average of $f$ is
\begin{equation}
  \label{eqn:f_avg}
  \overline f(T) :=  \frac1T \int_0^T f(X_s) \;\! ds . 
\end{equation}
The quantity $\overline f(T)$ is a random variable which -- under suitable conditions related to ergodicity -- converges almost
surely to the expectation value of $f$, which we denote by $\mathbb{E}_\pi(f)$.

{Moreover the distribution of $\sqrt T (\bar f(T)-\mathbb E_\pi(f))$ converges by the central limit theorem to a normal distribution with variance  $\sigma^2_f$.
The latter is referred to as asymptotic variance or time average variance constant (TAVC) which can be obtained as $\sigma^2_f=\lim_{T\to\infty} T\mathrm{Var}(\bar{f}(T))$, see in \cite[Chapter IV]{Asmussen2007a}, \cite{Rey-Bellet2015a} and \cite[Section 3.5]{Touchette2009a}.
Hence, the variance of $\overline{f}(T)$ decays for large times as
$\mathrm{Var}(\overline{f}(T))\sim \sigma^2_f/T$. It is then natural to identify a
time scale $\tau^f_{\rm v}:=\sigma^2_f/\mathrm{Var}_\pi(f)$.}
Note that $\tau^f_{\rm v}$ depends on the observable $f$ of interest --
roughly speaking it represents the autocorrelation time of $f(X_t)$.  In general $\tau^f_{\rm v}$ and $\tau_{\rm g}$ are different
time scales: $\tau^f_{\rm v}$ controls the convergence of $\overline f(T)$ while $\tau_{\rm g}$ controls the convergence of the
probability measure itself.  As for $\tau_{\rm g}$, one finds that $\sigma^2_f$ can be reduced by breaking detailed balance in Markov
chains~\cite{Bierkens2015a,Sun2010a} and SDEs~\cite{Duncan2016a,Hwang2015a}.

\subsubsection{Large deviations at level-1 and level-2}
\label{sec:largedev-intro}

A more detailed analysis of the large-$T$ behaviour of $\overline f(T) $ is available from large deviation
theory~\cite{Hollander2000a,Touchette2009a}.  Informally, one expects that for large $T$, the random variable $\overline{f}(T)$
satisfies
\begin{equation} 
  \label{equ:ldp-f} 
  \mathrm{Prob}\Bigl[ \overline f(T) \approx \hat f\Bigr] \asymp {\rm e}^{-T I_f(\hat f)} 
\end{equation} 
for some \emph{rate function} $I_f$ (which depends on the choice of test function $f$).  We use the notation in~\eqref{equ:ldp-f}
throughout this work as an informal way to state large deviation principles: it means that the log probability that
$\overline f(T)$ takes a value in a small interval containing $\hat{f}$ can be bounded above and below by quantities related to
the rate function $I_f$~\cite{Hollander2000a,Touchette2009a}.  The rate function achieves its minimal value of zero when $\hat f$
is equal to $\mathbb{E}_\pi(f)$, and the second derivative of $I_f$ at this minimum is related to $\sigma^2_f$.  The function $I_f$ is
a level-1 rate function~\cite{Hollander2000a}.

A yet more detailed analysis is available by considering not just the large deviations of a single test function $f$ but instead
to consider large deviations of the empirical measure.  That is, for a Markov chain on a discrete space $\Omega$, define the empirical measure
{\begin{equation}\label{eqn:emprical_avg}
  \bar\mu_T(x) :=  \frac1T \int_0^T \delta_{X_s,x} \;\! ds , 
\end{equation}}
where $\delta_{x,y}$ is a Kronecker delta. The empirical measure at time $t$ is a vector $\bar\mu_T = (\bar\mu_T(x))_{x\in\Omega}$. For
large enough $T$, ergodicity implies that $\bar\mu_T(x)$ converges almost surely to $\pi(x)$, and the fluctuations of the measure
$\mu$ in this limit are described by a large deviation principle at level-2:
\begin{equation}
  \mathrm{Prob}\bigl[\bar\mu_T \approx \nu\bigr] \asymp {\rm e}^{-T I_2(\nu)}
\end{equation}
where $I_2$ is the rate function~\cite{Donsker1983a}, which now depends on a vector $\nu$ instead of a single real argument
$\hat{f}$.  Note that the (level-1) rate function $I_f$ for any observable $f$ can be obtained by a contraction of this large
deviation principle, so the function $I_2$ contains a great deal of information about the convergence of a system to its steady
state.  Moreover, as might be expected from the terminology ``rate function'', the quantity $1/I_2(\mu)$ has an interpretation as
a $\mu$-dependent time scale associated with the decay of an initial measure $\mu$ to the invariant measure $\pi$.

Recent work by Rey-Bellet and Spiliopoulos~\cite{Rey-Bellet2015a,Rey-Bellet2016a} has motivated the analysis of $I_2$ as a measure
of the rate of convergence of processes to their steady states.  Their work, and that of Bierkens~\cite{Bierkens2015a}, show that
breaking detailed balance accelerates this convergence.  Note however that in contrast to the spectral gap -- where a single
number characterises the rate of convergence of the whole system -- the rate function $I_2$ depends on the measure $\mu$ for which
it is evaluated; similarly the asymptotic variance $\sigma^2_f$ depends on the specific observable $f$.  In this sense, the
information available from the asymptotic variance and the large deviations is greater than that available from the spectral gap,
but this extra information may also make these measures harder to interpret in terms of simple acceleration or slowing down of
convergence to equilibrium.  {Of course, other useful measurements of convergence rates are available, such as mixing
  times~\cite{Levin2009a}, cutoff phenomena (see e.g.~\cite{Labbe2016}, where cutoff was recently established for the asymmetric
  simple exclusion process) and log-Sobolev constants (e.g.~\cite{Diaconis1996a}), but these are not analysed in this work.}

\subsection{Outline}
\label{sec:Outline}

The remainder of this paper is organised as follows: Section~\ref{sec:theory} includes a theoretical analysis of the effects of
breaking detailed balance on convergence to steady states, including both Markov chains (Sec.~\ref{sec:markov-theory}) and
hydrodynamic limits (Sec.~\ref{sec:macro-theory}).  Section~\ref{sec:numerics} presents numerical results that illustrate this
acceleration in the zero-range process: we provides examples in both one-dimensional and two-dimensional settings.  Finally,
Section~\ref{sec:conc} contains our conclusions.

\section{Theoretical results}
\label{sec:theory}

\subsection{Acceleration of the microscopic dynamics}
\label{sec:markov-theory}

In this section, we consider an irreducible Markov jump process on a finite state space $\Omega$ which contains $n$ states.  In
terms of interacting particle systems, this process describes the dynamics of a finite number of particles that move on some
finite lattice.  The process is defined by rates $c(x\to y)$ for states $x,y\in\Omega$.  The condition of \emph{detailed balance}
(or \emph{reversibility}) is that for some probability measure $\pi$ and all $x,y$ then
\begin{equation}
  \label{eqn:DB}
  \pi(x) c(x\to y) = \pi(y)c(y\to x) .
\end{equation}
In this case the (unique) invariant measure of the Markov process is $\pi$.

Let the generator of the Markov process be $\LL$.  The generator has a representation as an $n\times n$ matrix and the
reversibility condition~\eqref{eqn:DB} corresponds to symmetry of $\LL$ with respect to the $L^2(\pi)$ inner product
$\langle f,g\rangle_{\pi}=\sum_x f(x) g(x) \pi(x)$. If detailed balance is broken (non-reversible Markov chain), then $\LL$ is not
symmetric with respect to $L^2(\pi)$, but one may always write
\begin{equation}
  \label{equ:LSA}
  \LL = \LL_S + \LL_A, 
\end{equation}
where $\LL_S$ is symmetric with respect to $L^2(\pi)$, while $\LL_A$ is antisymmetric.  Moreover, $\LL_S$ is a generator for a
reversible stochastic process, whose transition rates may be verified to be
\begin{equation}
\label{equ:cs}
  c_s(x\to y) = \frac12 \left[ c(x\to y) + \pi(y) c(y\to x) \pi(x)^{-1} \right] , 
\end{equation}
where $\pi$ is the invariant measure of $\LL$, which is also the invariant measure of $\LL_S$.  (Recall that the original Markov
process is finite and irreducible, which ensures that $\pi(x)>0$ for all $x$).  We also identify the off-diagonal elements of
${\cal L}_A$ as
\begin{equation*} 
  c_a(x\to y):=c(x\to y)-c_s(x\to y). 
\end{equation*}  
Hence one has from~\eqref{equ:cs} that
\begin{equation}
  \label{equ:cadj}
  \pi(x)[c_s(x\to y) + c_a(x\to y)]=\pi(y)[c_s(y\to x) - c_a(y\to x)] .
\end{equation}
Note that $\mathcal L$ and $\mathcal L_S$ both are generators, whereas the operator $\mathcal L_A$ is not a generator of a Markov chain.

{Alternatively one can think of the decomposition of $\mathcal L$ in $\mathcal L_S$ and $\mathcal L_A$ as follows:
  Consider the Markov process $\eta_t$ (with $t\in[-T,T]$ for some $T>0$) associated to $\mathcal L$ distributed according to the
  steady state $\pi$. The time reversed process $\hat\eta(t):=\eta(-t)$ is also associated to a generator, $\mathcal L^\ast$, say. The
  symmetric part of the generator can be recovered as $\mathcal L_S = (\mathcal L + \mathcal L^\ast)/2$.  }

Given these preliminaries, we can now be precise about the sense in which breaking detailed balance accelerates convergence: in
all cases we compare the process $\LL$ with the corresponding symmetrised process $\LL_S$. (Equivalently, one may imagine starting
from a reversible process $\LL_S$ and breaking detailed balance by adding an extra term $\LL_A$ to the generator.)  The processes
$\LL$ and $\LL_S$ both converge to the same invariant measure $\pi$ --- one aims to prove that convergence times such as
$\tau_{\rm g}$ or $1/I(\mu)$ are smaller for $\LL$ than for $\LL_S$.

\subsubsection{The spectral gap}
\label{sec:spec-gap}

To illustrate how breaking detailed balance accelerates convergence, we show in Prop.~\ref{prop:spectralgap} below that breaking
detailed balance can only increase the spectral gap, so that the convergence of the irreversible process is characterised by a
smaller value of the time $\tau_{\rm g}$.  This result has been proven in greater generality in~\cite{Hwang2005a,Rey-Bellet2016a},
but we provide a short proof here, for illustrative purposes.

{ To this end, consider an initial measure $\mu_0$, and represent it in terms of an eigendecomposition of $\LL$, so
  that
\begin{equation}
  \mu_0(x) = \pi(x) + \sum_{j=1}^m \bigl(\alpha_j\nu_j(x) + \bar\alpha_j \bar\nu_j(x)\bigr), 
\end{equation}
where the $\alpha_j\in\mathbb C$ are $\mu_0$-dependent coefficients, while $\nu_j$ are complex-valued measures which are left-eigenvectors corresponding to eigenvalues $\lambda_j$ of $\LL$. The overbar (e.g. $\bar\alpha$) denotes the complex conjugate.  Decomposing the non-zero eigenvalues $\lambda_j$ into real and imaginary parts, as $\lambda_j=\lambda_j^r+i\lambda_j^i$, the measure at time $t$ is given by
\begin{equation}\label{eqn:pdf_density}
  \mu_t = (\mathrm e^{t\mathcal{L}})^\dagger\mu_0 = \pi + \sum_{j=1}^m \mathrm e^{\lambda_j^rt}\bigl(\mathrm e^{i\lambda_j^i t}\alpha_j\nu_j 
  +  \mathrm e^{-i\lambda_j^i t}\bar\alpha_j\bar\nu_j\bigr), 
\end{equation}
where $\cdot^\dagger$ denotes a matrix transpose. Note that for real-valued eigenvalues (with $\lambda_j^i=0$) the term in brackets is equal to $2\alpha_j \nu_j$, as in this case also the left (and right) eigenvectors are real-valued.

 Moreover, $\lambda_j^r<0$ for all $j$, since $\LL$ is the generator of an  irreducible finite Markov process. }
One sees immediately that this Markov process relaxes exponentially fast to its steady state.  Moreover, the rate of this exponential decay is controlled by the non-zero eigenvalue of $\LL$ whose real part is smallest in magnitude. Similar results to the following proposition have already been obtained in e.g. \cite{ichiki2013,sakai2016}:

\begin{proposition}
  \label{prop:spectralgap} 
  Let $\mathcal L$ and $\mathcal L_S$ be given as above.  The non-zero eigenvalues of $-\LL_S$ are real and positive; let the
  smallest such eigenvalue be $\alpha_{\rm min}$ and the largest be $\alpha_{\rm max}$.  Then every non-zero eigenvalue $\lambda$ of $-\LL$
  satisfies 
\begin{equation}
\alpha_{\rm min} \leq {\rm Re}(\lambda) \leq \alpha_{\rm max}.
  \label{equ:alpha-bounds}
\end{equation} 
\end{proposition}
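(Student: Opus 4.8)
The plan is to work throughout in the complexified Hermitian inner product $\langle f,g\rangle_\pi = \sum_x \overline{f(x)}\,g(x)\,\pi(x)$, with respect to which $\mathcal{L}_S$ is self-adjoint and $\mathcal{L}_A$ is skew-adjoint. Writing $D_\pi=\mathrm{diag}(\pi)$, the reversibility of $\mathcal{L}_S$ says that $D_\pi\mathcal{L}_S$ is a real symmetric matrix, while the antisymmetry of $\mathcal{L}_A$ says that $D_\pi\mathcal{L}_A$ is real antisymmetric. From these two facts one gets immediately that $\langle f,\mathcal{L}_S f\rangle_\pi\in\mathbb{R}$ and $\langle f,\mathcal{L}_A f\rangle_\pi\in i\mathbb{R}$ for every complex vector $f$. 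In particular the eigenvalues of $-\mathcal{L}_S$ are real, and since $-\langle f,\mathcal{L}_S f\rangle_\pi = \tfrac12\sum_{x,y}\pi(x)\,c_s(x\to y)\,|f(y)-f(x)|^2\ge 0$ the operator $-\mathcal{L}_S$ is positive semidefinite; its kernel is spanned by the constant vector $\mathbf{1}$ (using that $\mathcal{L}_S$ inherits irreducibility from $\mathcal{L}$, since $c_s(x\to y)\ge\tfrac12 c(x\to y)$ by~\eqref{equ:cs}), so its non-zero eigenvalues are strictly positive, the smallest and largest being $\alpha_{\rm min}$ and $\alpha_{\rm max}$. This establishes the first sentence of the proposition.

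For the main bound I would take a non-zero eigenvalue $\lambda$ of $-\mathcal{L}$ with right eigenvector $f\neq 0$, so that $-\mathcal{L} f=\lambda f$, and normalise $\langle f,f\rangle_\pi=1$. Pairing the eigenvalue equation with $f$ and splitting $\mathcal{L}=\mathcal{L}_S+\mathcal{L}_A$ gives $\lambda = \langle f,-\mathcal{L}_S f\rangle_\pi + \langle f,-\mathcal{L}_A f\rangle_\pi$, where by the previous paragraph the first term is real and the second is purely imaginary. Hence $\mathrm{Re}(\lambda)=\langle f,-\mathcal{L}_S f\rangle_\pi$ is exactly the Rayleigh quotient of the self-adjoint operator $-\mathcal{L}_S$ at $f$, and it remains only to locate this quotient inside $[\alpha_{\rm min},\alpha_{\rm max}]$.

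The step that does the real work — and the only one that is not a one-line manipulation — is to show that $f$ is orthogonal to the kernel of $-\mathcal{L}_S$, i.e. $\langle \mathbf{1},f\rangle_\pi=\sum_x \pi(x)f(x)=0$; without it one would obtain only the trivial bounds $0\le\mathrm{Re}(\lambda)\le\alpha_{\rm max}$ and miss the sharp lower bound $\alpha_{\rm min}$. I would extract this orthogonality from the invariance of $\pi$: since $\pi^{\mathsf T}\mathcal{L}=0$, applying $\sum_x\pi(x)(\cdot)$ to $-\mathcal{L} f=\lambda f$ yields $0=\lambda\sum_x\pi(x)f(x)$, and $\lambda\neq 0$ forces $\sum_x\pi(x)f(x)=0$. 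Once $f\perp\mathbf{1}=\ker(-\mathcal{L}_S)$, expanding $f$ in an $L^2(\pi)$-orthonormal eigenbasis of $-\mathcal{L}_S$, all of whose contributing eigenvalues lie in $[\alpha_{\rm min},\alpha_{\rm max}]$, confines the Rayleigh quotient to $[\alpha_{\rm min},\alpha_{\rm max}]$, which is the claim. The conceptual point the argument is built to expose is that $\mathrm{Re}(\lambda)$ is controlled entirely by the symmetric part $\mathcal{L}_S$, with $\mathcal{L}_A$ feeding only into $\mathrm{Im}(\lambda)$; this is precisely what the self-adjoint/skew-adjoint splitting guarantees, and it is why adding $\mathcal{L}_A$ (breaking detailed balance) can never push $\mathrm{Re}(\lambda)$ below $\alpha_{\rm min}$, and hence never decreases the spectral gap.
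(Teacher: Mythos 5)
Your proof is correct and follows essentially the same route as the paper's: both arguments show the eigenvector is mean zero (from stationarity of $\pi$, equivalently $\mathcal E(1,\cdot)=0$), reduce $\mathrm{Re}(\lambda)$ to the Rayleigh quotient of the symmetric Dirichlet form, and invoke the variational characterisation of $\alpha_{\rm min}$ and $\alpha_{\rm max}$ over mean-zero functions. The only difference is presentational -- you work with the Hermitian inner product so that $\mathcal L_A$ contributes only to $\mathrm{Im}(\lambda)$, whereas the paper splits the eigenvector into real and imaginary parts and uses the bilinear form $\mathcal E(f_\lambda - i g_\lambda, f_\lambda + i g_\lambda)$, which amounts to the same computation.
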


\begin{proof}
  Define the Dirichlet form for $\LL$ as
  $\mathcal E(f,g):= \langle f,-\mathcal L g\rangle_\pi = \sum_x f(x) \mathcal (-\mathcal Lg)(x)\;\!\pi(x)$, where $\pi$ is the
  unique stationary distribution of $\mathcal L$. {Let $\lambda$ be a non-zero eigenvalue of $-\mathcal L$ with corresponding right eigenvector $f_\lambda+ig_\lambda$. As $\mathcal E(1,f)=0$ for all $f$, we obtain    $0=\mathcal E(1,f_\lambda + i g_\lambda) = \lambda (\langle 1,f_\lambda\rangle_\pi + i \langle 1,g_\lambda \rangle_\pi)$. Since
    $\lambda$ is non-zero, we obtain that $\langle 1,f_\lambda\rangle_\pi = 0 = \langle 1,g_\lambda \rangle_\pi$.  This implies that both $f_\lambda$ and $g_\lambda$ are mean zero, so
    $\mathrm {Var}_\pi(h) = \langle h,h\rangle_\pi$ for $h\in\{f_\lambda,g_\lambda\}$.  Since
    $\mathcal E(f_\lambda-ig_\lambda, f_\lambda+ig_\lambda)= \lambda \langle f_\lambda-ig_\lambda,f_\lambda+ig_\lambda\rangle_\pi
    = \lambda (\langle f_\lambda,f_\lambda\rangle_\pi + \langle g_\lambda,g_\lambda\rangle_\pi)$,
    the bilinearity of the Dirichlet form yields that the real part of $\lambda$ is given by}
\begin{equation}
  \label{equ:reim}
  \operatorname{Re}(\lambda) = \frac{\mathcal E(f_\lambda,f_\lambda) + \mathcal E(g_\lambda,g_\lambda)}
  {\langle f_\lambda,f_\lambda\rangle_\pi + \langle g_\lambda,g_\lambda\rangle_\pi}.
\end{equation}
In addition, one has { (for the min and max taken over the two cases $h=f_\lambda$ and $h=g_\lambda$)}
\begin{equation}
  \label{equ:hh}
  \min_{h\in \{f_\lambda,g_\lambda\}}\frac{\mathcal E(h,h)}{\langle h,h\rangle_\pi}
  \le\frac{\mathcal E(f_\lambda,f_\lambda) + \mathcal E(g_\lambda,g_\lambda)}{\langle f_\lambda,f_\lambda\rangle_\pi 
    + \langle g_\lambda,g_\lambda\rangle_\pi}
  \le\max_{h\in \{f_\lambda,g_\lambda\}}\frac{\mathcal E(h,h)}{\langle h,h\rangle_\pi}.
\end{equation}
Define ${\cal E}_S(f,g)=\langle f,-\mathcal L_S g\rangle_\pi$, and note that ${\cal E}(h,h)={\cal E}_S(h,h)$.  {Also
  $\alpha_{\rm min}=\mathrm{min}_{h: \langle 1,h\rangle_\pi = 0} \frac{\mathcal E_S(h,h)}{\langle h,h\rangle_\pi}$ }.  Hence the
left hand side of~\eqref{equ:hh} is bounded below by $\alpha_{\rm min}$.  Applying a similar argument to the right hand side
of~\eqref{equ:hh} and combining with~\eqref{equ:reim} finally yields {(\ref{equ:alpha-bounds})}.
%\begin{equation} 
%  \alpha_{\min}\le \operatorname{Re}(\lambda) \le \alpha_{\max} . 
%\end{equation}
\end{proof}

\subsubsection{Bounds on level-2 rate functions for discrete Markov processes}

From Prop.~\ref{prop:spectralgap} and using~\eqref{equ:def-taug}, one clearly has
\begin{equation} 
  \tau_{\rm g}^{\rm irr} \leq \tau^{\rm rev}_{\rm g}. 
\end{equation}
That is, the irreversible process converges to its steady state at least as quickly as the reversible one.  A similar
argument~\cite{Bierkens2015a} establishes that the level-2 rate functions for $\LL$ and $\LL_S$ are related as
\begin{equation}
  \label{equ:i2-faster}
  I_2(\mu) \ge I_2^S(\mu) , 
\end{equation}
again establishing a faster rate of convergence on breaking detailed balance. { Recall that results of the form
  \eqref{equ:i2-faster} yield information about the empirical measure $\bar\mu_T$ defined in~\eqref{eqn:emprical_avg}, whereas the
  {previous} result {(\ref{equ:alpha-bounds})} concerns the spectral gap and the convergence of $\mu_t$, the distribution of the process at time $t$ as defined in
  \eqref{eqn:pdf_density}. Note that {$\bar\mu_T$} is a random quantity, whereas {$\mu_t$} is the solution to a deterministic
  differential equation.}

We now show (Prop.~\ref{prop:ldpbounds}) that the rate of convergence of the irreversible model has an upper bound, as well as the
lower bound given by $I_2^S(\mu)$.  That is, $I_2(\mu)$ is bounded both above and below, just as the spectral gap is bounded
in~\eqref{equ:alpha-bounds}.  This limits the acceleration that is available by breaking detailed balance for (finite) discrete
Markov processes, in contrast to the situation for diffusions~\cite{Rey-Bellet2015a}.  {The proof for the following
  proposition is based on the variational formula for the level-2 LDP~\cite{Hollander2000a}. Whilst the lower bound, which is
  known in the literature, see e.g.~\cite{Bierkens2015a,Rey-Bellet2016a}, follows from the variational representation of the rate
  function, the upper bound is (to our knowledge) a novel result.}

\begin{proposition}
  \label{prop:ldpbounds}
  Consider a finite-state continuous-time Markov chain with generator $\mathcal L = \mathcal L_S +\mathcal L_A$ and transition
  rates $c(x\to y)=c_s(x\to y)+c_a(x\to y)$, as defined in Sec.~\ref{sec:markov-theory}.  The level-2 rate functional $I_2(\mu)$
  is bounded as follows:
  \begin{multline}
    I_2^S(\mu)
    \le
    I_2(\mu)  
    \le
    I_2^S(\mu)
    + \sum_{x \not = y}\bigl[c_s(x\to y) - \sqrt{c_s(x\to y)^2-c_a(x\to y)^2}\bigr] 
    \\ \times \sqrt{\tfrac{\mu(x)}{\pi(x)}\tfrac{\mu(y)}{\pi(y)}}\pi(x),
    \label{equ:i2-bound}
  \end{multline}
  where the rate functional $I_2^S(\mu)$ for the reversible process with generator $\mathcal L_S$ is given by
  $I_2^S(\mu) = \bigl\langle \sqrt{\tfrac \mu\pi},- \mathcal L_S \sqrt{\tfrac \mu\pi}\bigr\rangle_\pi$.
\end{proposition}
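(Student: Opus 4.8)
The plan is to work throughout with the Donsker--Varadhan variational representation of the level-2 rate function, which for a finite-state jump process reads
\begin{equation*}
  I_2(\mu) = \sup_{u>0} \sum_{x\neq y} \mu(x)\,c(x\to y)\Bigl(1 - \tfrac{u(y)}{u(x)}\Bigr),
\end{equation*}
the supremum running over strictly positive functions $u$ on $\Omega$ (see~\cite{Hollander2000a}). Before splitting into the two bounds I would record three structural identities that follow from the definitions in Sec.~\ref{sec:markov-theory}: the reversibility relation $\pi(x)c_s(x\to y)=\pi(y)c_s(y\to x)$; the antisymmetry $\pi(x)c_a(x\to y)=-\pi(y)c_a(y\to x)$, obtained by combining this with~\eqref{equ:cadj}; and the vanishing row sums $\sum_{y\neq x}c_a(x\to y)=0$, which hold because $\mathcal L_A$ is antisymmetric in $L^2(\pi)$ (hence has zero diagonal) while $\mathcal L$ and $\mathcal L_S$ share the same zero row sums.

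For the lower bound I would simply evaluate the functional at the reversible optimiser $u^\ast=\sqrt{\mu/\pi}$. Writing $\rho(x):=\sqrt{\mu(x)/\pi(x)}$, a short computation shows that the symmetric part contributes exactly $\langle\rho,-\mathcal L_S\rho\rangle_\pi=I_2^S(\mu)$, while the antisymmetric part contributes $\sum_{x\neq y}\mu(x)c_a(x\to y) - \sum_{x\neq y}\rho(x)\rho(y)\pi(x)c_a(x\to y)$. The first sum vanishes by the row-sum identity and the second because $\rho(x)\rho(y)$ is symmetric whereas $\pi(x)c_a(x\to y)$ is antisymmetric. Since $I_2(\mu)$ is a supremum it is at least this value, so $I_2(\mu)\ge I_2^S(\mu)$, recovering~\eqref{equ:i2-faster}.

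For the (novel) upper bound I would relax the single global supremum into independent suprema over each unordered pair $\{x,y\}$: any admissible $u$ fixes the ratio $t=u(y)/u(x)$ on that pair, so collecting the two orientations gives, with $a:=\mu(x)c(x\to y)$ and $b:=\mu(y)c(y\to x)$, a pairwise term $a(1-t)+b(1-1/t)$ whose unconstrained supremum over $t>0$ is $(a+b)-2\sqrt{ab}=(\sqrt a-\sqrt b)^2$, approached at $t=\sqrt{b/a}$. Dropping the compatibility constraint can only raise the value, so $I_2(\mu)\le\sum_{\{x,y\}}\bigl[(a+b)-2\sqrt{ab}\bigr]$. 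Using the three identities one computes $ab=\mu(x)\mu(y)\tfrac{\pi(x)}{\pi(y)}\bigl(c_s(x\to y)^2-c_a(x\to y)^2\bigr)$, so that $\sqrt{ab}=\rho(x)\rho(y)\pi(x)\sqrt{c_s^2-c_a^2}$; the non-negativity of $ab$ simultaneously guarantees $c_s^2\ge c_a^2$, hence the root is real.

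Finally I would subtract the corresponding reversible pairwise sum, which equals $I_2^S(\mu)$ exactly because for the reversible process the pairwise optima are globally compatible with $u^\ast$. In the difference, the cross terms linear in $c_a$ assemble into a multiple of $\sum_{x\neq y}c_a(x\to y)\bigl[\mu(x)-\tfrac{\pi(x)}{\pi(y)}\mu(y)\bigr]$, which vanishes by the same row-sum and antisymmetry identities, leaving precisely the stated sum $\sum_{x\neq y}\bigl[c_s(x\to y)-\sqrt{c_s^2-c_a^2}\bigr]\rho(x)\rho(y)\pi(x)$. The main obstacle I anticipate is not any single estimate but the careful bookkeeping between ordered and unordered pairs together with the sign relations, plus a clean justification that the pairwise relaxation is legitimate --- in particular at pairs where a rate vanishes, where the supremum is approached rather than attained.
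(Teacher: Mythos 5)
Your proposal is correct and follows essentially the same route as the paper: both start from the variational formula $I_2(\mu)=\sup_{f>0}\langle f^{-1},-\mathcal Lf\rangle_\mu$, obtain the lower bound by evaluating at the reversible optimiser $\sqrt{\mu/\pi}$, and obtain the upper bound by a pairwise application of $a t + b/t\ge 2\sqrt{ab}$ together with the identities $q(x,y)+q(y,x)=2c_s(x\to y)\pi(x)$ and $q(x,y)q(y,x)=[c_s(x\to y)^2-c_a(x\to y)^2]\pi(x)^2$. The only cosmetic difference is that the paper first substitutes $f=\sqrt{\mu/\pi}\,\mathrm e^{V}$ so as to split off $I_2^S(\mu)$ exactly and then bounds the remainder $I_A(\mu,V)$, whereas you relax the full functional pairwise and extract $I_2^S(\mu)$ at the end; the algebra is identical.
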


\begin{proof}
The rate functional is given by a variational formula~\cite{Hollander2000a}:
\begin{equation*}
  I_2(\mu) 
= \sup_{f>0}\;\! \langle f^{-1}, -\mathcal Lf\rangle_\mu.
\end{equation*}
In the symmetric case ($\mathcal L=\mathcal L_S$) the maximum is $I_2^S(\mu)$, which is attained when $f=\sqrt{\mu/\pi}$. In
general we write $f=\sqrt{\mu/\pi}\;\!\mathrm e^{V}$ for some potential $V$.

A direct computation yields
\begin{equation}
  I_2^S(\mu) = \sum_{x\neq y} \left( \sqrt\frac{\mu(x)}{\pi(x)}-\sqrt\frac{\mu(y)}{\pi(y)}   \right)  \sqrt{\mu(x)\pi(x)}  c(x\to y)
\end{equation}
and
\begin{equation}
  I_2(\mu)
  = I_2^S(\mu) + \sup_V I_A(\mu,V) 
  \label{equ:supIA}
\end{equation}
with
\begin{equation}
  \label{eqn:maximise}
  I_A(\mu,V)
  =\sum_{x\neq y}\sqrt\frac{\mu(y)}{\pi(y)}\left( 1-\mathrm e^{V(y)-V(x)}\right)\sqrt{\mu(x)\pi(x)} c(x\to y).
\end{equation}
If $V$ is a constant function, then $I_A(\mu,V)=0$ so clearly $\sup_V I_A(\mu,V)\ge 0$. Hence, \eqref{equ:supIA} yields the lower
bound in~\eqref{equ:i2-bound}, as in~\cite{Bierkens2015a}.

For the upper bound, it is convenient to define $m(x,y):=\frac12\sqrt{\frac{\mu(x)\mu(y)}{\pi(x)\pi(y)}}$ and
$q(x,y):=\pi(x)c(x\to y)$. This yields
\begin{equation}
  \label{equ:ia-sym}
  I_A(\mu,V)
  = \sum_{x\not=y} m(x,y) \left[ (1-\mathrm e^{V(y)-V(x)})q(x,y) + (1-\mathrm e^{V(x)-V(y)})q(y,x)\right] , 
\end{equation}
where we have symmetrised the summand with respect to $x,y$.  For positive constants $a, b$, one may easily establish the general
inequality $a{\rm e}^{V} + b{\rm e}^{-V} \geq 2\sqrt{ab}$.  Applying this inequality to the summand in~\eqref{equ:ia-sym} yields
\begin{equation}
  \label{equ:ia-qq}
  I_A(\mu,V)
  \leq \sum_{x\not=y} m(x,y) \left[ q(x,y)+q(y,x) - 2\sqrt{q(x,y)q(y,x)}  \right] .
\end{equation}
From~\eqref{equ:cs},~\eqref{equ:cadj} one has $q(x,y)+q(y,x)=2c_s(x\to y)\pi(x)$ and
$q(x,y)q(y,x)=[c_s^2(x\to y)-c_a^2(x\to y)]\pi(x)^2$; substituting these results into~\eqref{equ:ia-qq} yields
\begin{equation*}
  I_A(\mu,V) \leq  \sum_{x\not=y}\sqrt{\tfrac{\mu(x)}{\pi(x)}\tfrac{\mu(y)}{\pi(y)}} \bigl[c_s(x\to y) 
  - \sqrt{c_s(x\to y)^2-c_a(x\to y)^2}\bigr]\pi(x),
\end{equation*}
and the combination with~\eqref{equ:supIA} establishes the upper bound in~\eqref{equ:i2-bound}.
\end{proof}

\subsubsection{Discussion}
\label{sec:markov-discuss}

Our intuition for the (bounded) acceleration by breaking detailed balance is as follows: for reversible processes we can think of $\mu_t$ (the distribution of the process at time $t$) undergoing a steepest descent process (gradient flow) for the free energy
$F(t)=\sum_x \mu_t(x) \log( \mu_t(x)/\pi(x))$, within a particular geometric setting~\cite{Maas2011a}. The precise nature of
this geometry is immaterial for this discussion: the key point is that relaxation to equilibrium is fast when the free energy
gradient is steep, and tends to be slow when it is shallow.  On breaking detailed balance, the free energy still decreases
monotonically, but its motion is no longer restricted to the direction of steepest descent.  This can have several possible
effects and the rate of change of $F(t)$ may either increase or decrease on breaking detailed balance. However, we argue that an
important contribution to the acceleration of convergence arises because the irreversible component of the dynamics drives the
system away from regions where the free energy gradient is shallow and into regions where it is steeper.  We will demonstrate this
effect explicitly at the hydrodynamic level, in Sec.~\ref{sec:hydro-accel}.

Notice however, that while slow processes associated with $\LL_S$ are accelerated by breaking detailed balance, the inequality
involving $\alpha_{\rm max}$ in Prop.~\ref{prop:spectralgap} implies that fast aspects of the relaxation tend to be slowed down.
Indeed, $\mathrm{tr}(\LL_A)=0$ so $\mathrm{tr}(\LL)=\mathrm{tr}(\LL_S)$: since the trace is equal to the sum of the eigenvalues,
one sees that if some (slow) processes are accelerated by breaking detailed balance another set of (faster) processes must be
slowed down by a similar amount.  Within the intuitive picture, our interpretation is that the irreversible component of the
dynamics acts to push the system away from regions where the free energy gradient is very steep, so the differences between very
fast and very slow processes tend to be smoothed out by the irreversibility.

\subsection{Accelerating macroscopic processes}
\label{sec:macro-theory}

In this section we consider hydrodynamic limits of interacting particle systems, as described by the macroscopic fluctuation
theory (MFT)~\cite{Bertini2015a}.  We will demonstrate that the large deviation result~\eqref{equ:i2-faster} has a counterpart at
the hydrodynamic level.  We also explore the geometrical interpretation of this result, and we connect our result to earlier work
related to SDEs that describe the motion of single particles~\cite{Rey-Bellet2015a}.

\subsubsection{Macroscopic Fluctuation Theory}
\label{sec:Macr-Fluct-Theory}

We first recall the core parts of the Macroscopic Fluctuation Theory (MFT). For a detailed review we refer
to~\cite{Bertini2015a}. Let $\Lambda \subseteq \mathbb R^d$ be a connected domain with boundary $\partial \Lambda$. {
  For simplicity, we choose here the domain $\Lambda=[0,1]^d$.} If we consider a microscopic particle process {  (indexed by $L$), its description
within MFT involves two random fields, the empirical particle density$\rho_t^L$ and the empirical current $j_t^L$.  Roughly speaking,
for $x\in\Lambda$ then $\rho_t^L$ is the local particle density and $j_t^L$ is a vector that indicates the rate of
particle flow.  

The idea of the hydrodynamic limit is that if we observe an interacting particle system on suitable large scales of length and time, then the
system can be described in terms of sufficiently smooth fields $\rho$ and $j$, instead of requiring a microscopic description in
which all particle positions are taken into account. The deterministic quantities $\rho$ and $j$ are then related by a continuity equation given by
\begin{equation}
  \partial_t \rho_t + \nabla \cdot j_t = 0 .
\end{equation}}

The domain $\Lambda$ is fixed in the hydrodynamic limit.  The relevance to large length and time scales in the microscopic model
is that one considers a large number of particles $N$ within a domain $\Lambda_L$ of linear size $L$.  One takes $N,L$ to infinity
together for a fixed density $\tilde\rho_0=N/L^d$.  The domain $\Lambda$ is obtained by rescaling the (increasingly large) domain
$\Lambda_L$, so that $\Lambda$ remains fixed as $L\to\infty$.

\renewcommand{\j}{{\color{red}YUK}}  %% RLJ: I added this to suppress the funny straight js, use \jmath instead

Within this hydrodynamic limit, the behaviour of the system on suitable (large) scales of space and time becomes increasingly
deterministic.  For example, given a time interval $[0,T]$ and initial and final densities $\rho_0$ and $\rho_T$, the probability
measure for paths connecting these initial and final states concentrates (in the hydrodynamic limit) on a single most likely path.
This result can be expressed as a large deviation principle for paths, which can, following \cite{Bertini2015a}, be written as
{
\begin{equation}
  \label{equ:ldp-path}
  \mathrm{Prob}\left[ (\rho_t^L,j_t^L)_{t\in[0,T]} \approx (\rho_t,j_t)_{t\in[0,T]} \right] \asymp {\rm e}^{-L^d{\cal I}
    (\rho,j)}
\end{equation}}
with
\begin{equation}
  \label{eqn:LDP1}
  \mathcal I(\rho, j)
  =\frac 14\int_{0}^{T}\int_\Lambda (j_t-J(\rho_t))\cdot\chi(\rho_t)^{-1}(j_t-J(\rho_t))\;\!dx\;\!dt
\end{equation}

whenever $\partial_t \rho_t = -\nabla\cdot j_t$ is satisfied, and $\mathcal I(\rho, j) =\infty$ otherwise. { We refer
  the reader to the review~\cite{Bertini2015a} for details on the validity of~\eqref{equ:ldp-path} for a large class of particle
  systems including the symmetric exclusion process and zero-range processes~\cite{Kipnis1999a,Spitzer1970a}.}

Note that in contrast to the large deviation principle in Sec.~\ref{sec:largedev-intro} which is concerned with large times, this principle
involves a limit of large $L$, with a fixed time interval $[0,T]$. 

Physically, we interpret $J(\rho_t)$ in~\eqref{eqn:LDP1} as
the most likely current field $j_t$, given that the system has density $\rho_t$.  Within MFT, the current is
assumed~\cite[Eq.~(2.6)]{Bertini2015a} to have the form
\begin{equation}
  \label{equ:JJ}
  J(\rho) =  -D(\rho) \nabla\rho + \chi(\rho) E , 
\end{equation}
where $\chi(\rho)$ and $D(\rho)$ are symmetric positive definite $d\times d$ matrices that depend on the local density $\rho$, and
$E$ is a fixed ($x$-dependent) vector field.

Physically, $D$ and $\chi$ correspond to a density-dependent diffusivity and mobility, while $E$ corresponds to an external force.
For a given interacting particle system, the parameters $D$, $\chi$ and $E$ can (in principle) be derived from the microscopic
rules of the model.  These parameters (along with appropriate boundary conditions associated with $\partial\Lambda$) fully specify
the rate function~\eqref{eqn:LDP1} and they fully describe the hydrodynamic limit of the interacting particle system. To fix the
ideas precisely, it may be useful to note that $J(\rho)$ in~\eqref{equ:JJ} is itself a field, whose value at position
$x\in\Lambda$ is $J(\rho)(x) = -D(\rho(x)) \nabla\rho(x) + \chi(\rho(x)) E(x)$.

Since $J(\rho)$ is the most likely current for a given density $\rho$, it follows that for a given initial condition, the path
measure is dominated by paths $(\rho_t)_{t\in[0,T]}$ which solve $\partial_t \rho = - \nabla \cdot J(\rho)$.  These paths have
${\cal I}=0$ and are said to satisfy the hydrodynamics.

As well as the large-deviation principle for paths~\eqref{equ:ldp-path}, the MFT also provides a large-deviation principle for the
fluctuations of the instantaneous density, in the steady state of the system.  That is, if the time $T$ is large enough that the
system has converged to its steady state, one has
{
\begin{equation}
  \label{equ:quasi} 
  \mathrm{Prob}[\rho_T^L \approx \rho] \asymp {\rm e}^{-L^d{\cal V}(\rho)}, 
\end{equation}}
where $\cal V$ is called the quasipotential: it determines the probability of fluctuations in the density.  {
  Eq.~\eqref{equ:quasi} is derived under the assumption that the adjoint dynamics satisfy a further Large Deviation principle for
  a rate functional $\mathcal I^\ast$. We refer to chapter II in \cite{Bertini2015a} for a detailed discussion.}

We assume throughout that our system has a unique steady state, for which the most likely ($x$-dependent) density is $\rhobar$.  In this case
${\cal V}(\rhobar)=0$ and ${\cal V}(\rho)>0$ for all $\rho\neq\rhobar$.

\subsubsection{Reversible and irreversible systems}
\label{sec:Reversible-irrev-sys}

For the microscopic dynamics, we already observed that the detailed balance condition~\eqref{eqn:DB} describes an important
special case.  By starting from this case, the generator was decomposed into two components~\eqref{equ:LSA}, corresponding to a
reversible process and a correction term that captures the irreversibility.  At the hydrodynamic level, there is a corresponding
decomposition which takes place at the level of the current: one writes
\begin{equation} 
  \label{equ:JSA}
  J=J_S + J_A . 
\end{equation}
The symmetric part of the current is defined~\cite[Equ.~(2.19)]{Bertini2015a} as
\begin{equation}
  \label{equ:JS}
  J_S(\rho) = - \chi(\rho) \nabla \frac{\delta {\cal V}}{\delta \rho},
\end{equation}
where $\frac{\delta {\cal V}}{\delta \rho}$ denotes the functional derivative of the quasipotential introduced in
Eq.~\eqref{equ:quasi}.  The antisymmetric part of the current is orthogonal to $J_S$, in the sense that
\begin{equation}
  \label{equ:HJ} 
  \int_\Lambda J_A(\rho) \cdot \chi^{-1}(\rho) J_S(\rho) \;\! dx = 0 , 
\end{equation}
which is sometimes referred to as a \emph{Hamilton-Jacobi equation}.  Note that this is an orthogonality in the space of fields:
the presence of the integral implies that the currents $J_S$ and $J_A$ do not have to be orthogonal at any specific point $x$.  We
note that on combining~\eqref{equ:JS} and~\eqref{equ:HJ}, one has
$\int_\Lambda J_A(\rho) \cdot \nabla \frac{\delta {\cal V}}{\delta \rho} \;\! dx = 0 ;$ integrating by parts and
using~\eqref{equ:JSA} one sees that
\begin{equation}
  \partial_t {\cal V} = \langle \partial_t \rho , \frac{\delta {\cal V}}{\delta \rho} \rangle 
  = \langle {\rm div} J,  -\frac{\delta {\cal V}}{\delta \rho} \rangle = -\langle J_S,\chi^{-1} J_S\rangle
\end{equation} 
which is independent of $J_A$.  Hence the quasipotential is non-increasing for paths satisfying the hydrodynamics, and (for any
given $\rho_t$) its time derivative is independent of $J_A$.

The special case in which the microscopic model is reversible has two implications for the hydrodynamic limit as described by MFT.
First, reversible models lead to $J_A=0$, so $J=J_S$.  Second, assuming that correlations in the particle model occur only on the
microscopic scale, the quasipotential within the MFT takes the simple (local) form~\cite[Equ.~(2.25)]{Bertini2015a}
\begin{equation}
  \label{equ:quasi-eq}
  {\cal V}(\rho)=\int_\Lambda \Bigl[ f(\rho)-f(\rhobar)-f'(\rhobar)(\rho-\rhobar) \Bigr]\;\! dx,
\end{equation} 
where $f(\rho)$ is the free energy per unit volume.  (The dependence of $f$ on $\rho$ is fixed by the microscopic model of
interest; note also that both $\rho$ and $\rhobar$ depend in general on the position $x$, but $f$ is a local function
$f(\rho)(x)=f(\rho(x))$.)

Hence for reversible microscopic models, the hydrodynamic current obeys
\begin{equation}
  J(\rho) = J_S(\rho) = -\chi(\rho) f''(\rho) \nabla \rho + \chi(\rho)  \nabla f'(\rhobar) .
\end{equation}
In this case consistency with~\eqref{equ:JJ} requires
\begin{equation} 
  \label{equ:einstein}
  E=\nabla f'(\rhobar), \qquad D(\rho)=f''(\rho)\chi(\rho) . 
\end{equation}
The second of these conditions is required within MFT. It is known as the local Einstein relation since it relates the mobility
$\chi$ to the diffusion constant $D$.  { Note that the equations~\eqref{equ:einstein} are consistent with the
  hydrodynamic limit for a large class of particle systems of `gradient type', see~\cite[Chapter VIII, Section G]{Bertini2015a}.}

We end this section with a brief comment on the boundary conditions within MFT. If the boundary is associated with coupling of the
system to a reservoir at chemical potential $\lambda$, the density at the boundary is fixed such that $f'(\rho)=\lambda$.  If
particles cannot penetrate the boundaries, one requires $D\nabla\rho=\chi E$ (and $j=0$) on $\partial\Lambda$.  Paths (or
configurations) that do not respect these boundary conditions have ${\cal I}=\infty$.

\subsubsection{Breaking detailed balance accelerates convergence}
\label{sec:hydro-accel}

We now state the sense in which breaking detailed balance accelerates convergence of interacting particle systems at the
hydrodynamic scale.  For the microscopic models, we compared two Markov chains, with the same invariant measure and generators
$\LL$ and $\LL_S$.  At the hydrodynamic scale, we will compare two systems with the same quasipotential (this corresponds to
comparing two microscopic models with the same invariant measure).  One system is irreversible and has a general $J$ given
by~\eqref{equ:JJ}; the second system is reversible and so $J_A=0$.  {In order to ensure a fair comparison, we also assume that the two models have the same mobility $\chi(\rho)$: for Markov processes the equivalent condition was that we always compared models with the same ${\cal L}_S$.}  Since $\cal V$ {and $\chi$} are the same for both models, they both have the
same symmetric current $J_S$ which is given by~\eqref{equ:JS}.

For each of these systems, we consider the large deviations of the time-averaged density, following Sec.~\ref{sec:largedev-intro}.
Large deviation principles of the form
{
\begin{equation} 
  \label{equ:ldp-mft-l2}
  \mathrm{Prob}\left[\frac1T \int_0^T \rho^L_t(\cdot) \;\!dt  \approx \rho(\cdot) \right] \asymp {\rm e}^{-T L^d I_2(\rho)}
\end{equation}}
apply in both reversible and irreversible models. This large deviation principle applies on taking the large-$T$ limit after the
hydrodynamic limit: one should take $L\to\infty$ before $T\to\infty$.  
%We now show how breaking detailed balance can accelerate
%convergence to equilibrium on the hydrodynamic level, in the sense of large deviation principles. 
To {obtain bounds on $I_2$}, we introduce the
so-called level-2.5 large-deviation principle for the joint fluctuations of the empirical current and empirical measure \cite{Barato2015,bertini2015b}.  That is,
{
\begin{equation}
  \label{equ:ldp-mft-l2.5}
  \mathrm{Prob}\left[\frac1T \int_0^T \rho^L_t(\cdot)\;\!dt \approx \rho(\cdot), \frac1T \int_0^T j^L_t(\cdot) \;\!dt \approx  j(\cdot)\right] 
  \approx {\rm e}^{-T L^d I_{2.5}(\rho,j)} . 
\end{equation}}

If we assume that the paths that dominate the level-2.5 LDP are constant in time, the relevant rate function can be obtained
from~\eqref{equ:ldp-path} as
  \begin{equation}
    \label{eqn:level2.5}
    I_{2.5}(\rho,j) = \frac{1}{4} \int_\Lambda (j-J(\rho))\cdot\chi(\rho)^{-1}(j-J(\rho))\;\!dx
  \end{equation}
  if $\nabla\cdot j=0$, and $I_{2.5}=\infty$ otherwise.  The assumption of time-independent paths is equivalent to assuming that
  no dynamical phase transition takes place~\cite{Bertini2001a,Bodineau2004a}.  Using this assumption, we now calculate a bound
  (Prop.~\ref{prop:ldpinequal}) for the level-2 rate functionals, which is analogous to~\eqref{equ:i2-faster} in the microscopic
  case.

\begin{proposition}
  \label{prop:ldpinequal} 
  Let the level-2.5 rate functional be given by~\eqref{eqn:level2.5} and let $I_2$ be the level-2 large deviation rate functional
  obtained from $I_{2.5}$ by contraction. We write $I_2^{\rm rev}$ for this rate functional if the current is symmetric,
  $J = J_S$, and we write $I_2^{\rm irrev}$ for the rate functional for the general case $J = J_S + J_A$ as
  in~\eqref{eqn:currentdecomposition}. Then
  \begin{equation}
    I_2^{\rm irrev}(\rho) \ge I_2^{\rm rev}(\rho). 
    \label{equ:i2-faster-mft}
\end{equation}
\end{proposition}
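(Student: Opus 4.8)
The plan is to compute both rate functionals explicitly by performing the contraction of the level-2.5 functional \eqref{eqn:level2.5}, and then to read off \eqref{equ:i2-faster-mft} from an orthogonal (Pythagorean) decomposition in the weighted inner product $\langle u,v\rangle_{\chi^{-1}} := \int_\Lambda u\cdot\chi(\rho)^{-1}v\,dx$. Since the level-2.5 functional carries the constraint $\nabla\cdot j=0$, the contraction reads
\[
  I_2(\rho)=\inf_{j:\,\nabla\cdot j=0}\tfrac14\,\langle j-J(\rho),\,j-J(\rho)\rangle_{\chi^{-1}},
\]
so that both $I_2^{\rm rev}$ (with $J=J_S$) and $I_2^{\rm irrev}$ (with $J=J_S+J_A$ as in \eqref{equ:JSA}) are squared distances, in this norm, from the admissible current $J$ to the subspace of divergence-free fields.

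First I would record the two orthogonality relations that drive the argument. The geometric fact is that every divergence-free field is orthogonal, in $\langle\cdot,\cdot\rangle_{\chi^{-1}}$, to every field of the form $\chi\nabla\phi$: integrating by parts gives $\langle j,\chi\nabla\phi\rangle_{\chi^{-1}}=\int_\Lambda j\cdot\nabla\phi\,dx=-\int_\Lambda(\nabla\cdot j)\phi\,dx=0$, with the boundary term vanishing under the MFT boundary conditions. Because $J_S=-\chi\nabla\frac{\delta\mathcal V}{\delta\rho}$ by \eqref{equ:JS} is exactly such a $\chi$-gradient, this yields $\langle j,J_S\rangle_{\chi^{-1}}=0$ for every admissible divergence-free $j$. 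The second relation, $\langle J_A,J_S\rangle_{\chi^{-1}}=0$, is precisely the Hamilton--Jacobi orthogonality \eqref{equ:HJ}.

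Next I would assemble the decomposition. In the reversible case, orthogonality of $J_S$ to all divergence-free $j$ gives $\langle j-J_S,j-J_S\rangle_{\chi^{-1}}=\langle j,j\rangle_{\chi^{-1}}+\langle J_S,J_S\rangle_{\chi^{-1}}$, which is minimized at $j=0$, so $I_2^{\rm rev}(\rho)=\tfrac14\langle J_S,J_S\rangle_{\chi^{-1}}$. In the irreversible case I would group $j-J_S-J_A=(j-J_A)-J_S$ and expand; both cross terms $\langle j,J_S\rangle_{\chi^{-1}}$ and $\langle J_A,J_S\rangle_{\chi^{-1}}$ vanish by the two relations above, leaving
\[
  4\,I_{2.5}(\rho,j)=\langle j-J_A,\,j-J_A\rangle_{\chi^{-1}}+\langle J_S,J_S\rangle_{\chi^{-1}}.
\]
Taking the infimum over divergence-free $j$ and discarding the nonnegative first term gives $I_2^{\rm irrev}(\rho)\ge\tfrac14\langle J_S,J_S\rangle_{\chi^{-1}}=I_2^{\rm rev}(\rho)$, which is \eqref{equ:i2-faster-mft}.

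I expect the main obstacle to be the rigorous justification of $\langle j,J_S\rangle_{\chi^{-1}}=0$, that is, the vanishing of the boundary term $\int_{\partial\Lambda}\phi\,(j\cdot n)\,dS$ produced by the integration by parts. This must be checked in the two boundary regimes of Sec.~\ref{sec:Reversible-irrev-sys}: for no-flux boundaries one has $j\cdot n=0$ on $\partial\Lambda$, whereas for reservoir boundaries the density is fixed on $\partial\Lambda$, so the potential $\frac{\delta\mathcal V}{\delta\rho}$ can be taken to vanish there. One should also confirm that the contraction genuinely ranges over divergence-free currents only (inherited from the constraint in \eqref{eqn:level2.5}) and that the infimum stays within the admissible class; beyond these points the argument is the elementary Pythagorean computation above.
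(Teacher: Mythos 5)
Your proof is correct and follows essentially the same route as the paper's: contraction of the level-2.5 functional over divergence-free currents, the Hamilton--Jacobi orthogonality \eqref{equ:HJ}, the integration-by-parts orthogonality of $J_S$ to divergence-free fields, and the identification $I_2^{\rm rev}(\rho)=\tfrac14\int_\Lambda J_S\cdot\chi^{-1}J_S\,dx$ with minimiser $j=0$. Your packaging of the algebra as the single Pythagorean identity $4I_{2.5}(\rho,j)=\langle j-J_A,j-J_A\rangle_{\chi^{-1}}+\langle J_S,J_S\rangle_{\chi^{-1}}$ is marginally tidier than the paper's three-term split followed by dropping a square, but the substance is identical.
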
 { {\it Remark:} Note that this result will be strengthened later. We will obtain in
  equation~\eqref{eqn:LDPfunctional2} an exact identity for $I_2^{\rm irrev}$ as the sum of $I_2^{\rm rev}$ and a non-negative
  quantity.

\begin{proof}
We write $I_2$ for $I_2^{\rm irrev}$.
The rate functional at level-$2$ can be obtained by a contraction of the level-2.5 rate functional,
\begin{equation}\label{eqn:2.5to2ldp}
  I_2(\rho)=\inf_{j:\nabla \cdot j=0}I_{2.5}(\rho,j) . 
\end{equation}
Note that $I_{2.5}(\rho,j)$ as given in equation \eqref{eqn:level2.5} is (using \eqref{equ:HJ}) equal to the sum of the following three summands:
\begin{multline}\label{eqn:1st_split_2.5}
  %\begin{split}
    \frac{1}{4} \int_\Lambda (j-J_S(\rho))\cdot\chi(\rho)^{-1}(j-J_S(\rho))\;\!dx\\
    +\frac{1}{4} \int_\Lambda (j-J_A(\rho))\cdot\chi(\rho)^{-1}(j-J_A(\rho))\;\!dx
    -\frac{1}{4} \int_\Lambda j\cdot\chi(\rho)^{-1}j\;\!dx.
  %\end{split}
\end{multline}
The summand in the first line coincides with the symmetric rate functional $I_{2.5}^{\rm rev}(\rho,j)$ and the second line is the part that corresponds to the anti-symmetric dynamics. Dropping the first summand in the second line (which is non-negative), we obtain 
\begin{equation}\label{eqn:2.5ldp_ineq}
I_{2.5}(\rho,j)\ge 
\frac{1}{4} \int_\Lambda (j-J_S(\rho))\cdot\chi(\rho)^{-1}(j-J_S(\rho))\;\!dx
-\frac{1}{4} \int_\Lambda j\cdot\chi(\rho)^{-1}j\;\!dx.
\end{equation}
An expansion of the square shows that the right hand side is equal to
\[
\frac{1}{4} \int_\Lambda J_S(\rho)\cdot\chi(\rho)^{-1}J_S(\rho)\;\!dx
-\frac 12 \int_\Lambda J_S(\rho)\cdot\chi(\rho)^{-1} j\;\!dx,
\]
and the last summand vanishes under the assumption that $\nabla\cdot j=0$, as by equation \eqref{equ:JS}
\begin{equation}\label{eqn:orth_j_J_S}
\int_\Lambda J_S(\rho)\cdot\chi(\rho)^{-1} j\;\!dx
=- \int_\Lambda 
\nabla \frac{\delta {\cal V}}{\delta \rho}\cdot j\;\!dx= \int_\Lambda  \frac{\delta {\cal V}}{\delta \rho}\nabla\cdot j\;\!dx =0.
\end{equation}
We obtain with \eqref{eqn:2.5to2ldp} that 
\[
  I_2(\rho)=\inf_{j:\nabla \cdot j=0}I_{2.5}(\rho,j)\ge \frac{1}{4} \int_\Lambda J_S(\rho)\cdot\chi(\rho)^{-1}J_S(\rho)\;\!dx.
\]
{To establish (\ref{eqn:2.5to2ldp}) we now show that the right hand side of this expression coincides with $I_2^{\rm rev}(\rho)$.}
%We are finished once the right hand side is identified with $I_2^{\rm rev}(\rho)$.
Note that again for $j$ such that $\nabla\cdot j=0$, by the same argument as in \eqref{eqn:orth_j_J_S}, the reversible level-2.5 rate functional is equal to
\begin{equation}\label{eqn:ldp_2.5_sym_xx}
I_{2.5}^{\rm rev}(\rho,j)=
\frac 14 \int_\Lambda j\cdot \chi(\rho)^{-1} j \;\! dx + \frac 14 \int_\Lambda J_S(\rho)\cdot \chi(\rho)^{-1}J_S(\rho)\;\! dx.
\end{equation}
As one would expect for the reversible case, the infimum in \eqref{eqn:2.5to2ldp} is clearly attained for a vanishing current ($j=0$), so that
\begin{equation}\label{eqn:ldp_2_sym_xx}
I_2^{\rm rev}(\rho)= \frac{1}{4} \int_\Lambda J_S(\rho)\cdot\chi(\rho)^{-1}J_S(\rho)\;\!dx,
\end{equation}
which completes the proof.
\end{proof}
}

Of course, given the acceleration at the microscopic scale, the result~\eqref{equ:i2-faster-mft} that this acceleration is
preserved at the hydrodynamic limit may not be surprising.  However, we show below that the geometric structure
underlying the MFT allows some stronger results for this acceleration to be established.

\subsubsection{Splitting the current}
\label{sec:Splitting-current}

To understand the geometrical origin of~\eqref{equ:i2-faster-mft} in more detail, we now show that as well as the
decomposition~\eqref{equ:JSA}, the antisymmetric current $J_A$ has a further decomposition into two parts which are orthogonal to
each other, and are both orthogonal to $J_S$. (Here, orthogonality should be understood in the sense of~\eqref{equ:HJ}.)

{  We consider the problem
\begin{equation} 
  \label{equ:supPsi}
  \nabla\cdot\bigl(\chi(\rho) \nabla\psi \bigr) = -\nabla\cdot J_A(\rho),
\end{equation}
with the boundary condition $\psi = 0$ on $\partial\Lambda$.
For any fixed $\rho$ (such that $\chi(\rho)$ and $J_A(\rho)$ are sufficiently regular) equation \eqref{equ:supPsi} has a unique strong solution $\psi$ (see for example Theorem~6.24 in \cite{GilbargTrudinger2001}).} This solution $\psi$ is therefore a functional of $\rho$ we will denote with $\psi(\rho)$.  Eq.~\eqref{equ:supPsi}
motivates us to decompose $J_A(\rho)$ as
\begin{equation} 
  \label{equ:JAdecomp}
  J_A(\rho) = -\chi(\rho) \nabla\psi(\rho)  + J_F(\rho) , 
\end{equation}
where $J_F(\rho)$ is a new vector field, which is again a functional of $\rho$.  From~\eqref{equ:supPsi} we see that
\begin{equation}
  \label{eqn:divergencefree}
  \nabla \cdot J_F(\rho) = 0
\end{equation}
for all $\rho$. 

We arrive at the following structure for the hydrodynamic current:
\begin{equation}
  \label{eqn:currentdecomposition}
  J(\rho) = J_S(\rho) -\chi(\rho) \nabla \psi(\rho) + J_F(\rho).
\end{equation}
Of the three terms on the right hand side, the first is familiar as the symmetric current, while the third is divergence free and
so does not transport any density.  The remaining term (involving $\psi$) specifies how the density is transported by the
antisymmetric current, and also determines the large deviations at level-$2$. The latter will be established below as a consequence of the following proposition.

\begin{proposition}
  \label{prop:splittingofcurrent}
  The three terms on the right hand side of Eq.~\eqref{eqn:currentdecomposition} are all orthogonal in the sense of   Eq.~\eqref{equ:HJ}. Moreover, $J_S(\rho)$ and $-\chi(\rho)\nabla\psi(\rho)$ are orthogonal to all divergence free vector fields that vanish on the boundary.
\end{proposition}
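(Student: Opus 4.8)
The plan is to establish the three pairwise orthogonalities together with the final (``Moreover'') statement by exploiting the fact that two of the three currents in~\eqref{eqn:currentdecomposition}, namely $J_S(\rho)$ and $-\chi(\rho)\nabla\psi(\rho)$, are \emph{gradient currents}: each is of the form $-\chi(\rho)\nabla\phi$, with $\phi=\frac{\delta{\cal V}}{\delta\rho}$ by~\eqref{equ:JS} and $\phi=\psi$ respectively. The one computation I need is integration by parts,
\begin{equation*}
  \int_\Lambda \bigl(-\chi(\rho)\nabla\phi\bigr)\cdot\chi(\rho)^{-1}w\;\!dx
  = -\int_\Lambda \nabla\phi\cdot w\;\!dx
  = \int_\Lambda \phi\,\nabla\cdot w\;\!dx - \int_{\partial\Lambda}\phi\,(w\cdot n)\;\!dS ,
\end{equation*}
with $n$ the outer normal, so that any gradient current is orthogonal, in the sense of~\eqref{equ:HJ}, to a field $w$ whenever the two terms on the right vanish.

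First I would dispose of the ``Moreover'' claim: if $w$ is divergence free and vanishes on $\partial\Lambda$, then the bulk term vanishes because $\nabla\cdot w=0$ and the boundary term vanishes because $w\cdot n=0$ on $\partial\Lambda$, and this holds for \emph{any} scalar $\phi$. Taking $\phi=\frac{\delta{\cal V}}{\delta\rho}$ and $\phi=\psi$ then shows that both $J_S(\rho)$ and $-\chi(\rho)\nabla\psi(\rho)$ are orthogonal to every such $w$. Next I would treat the two orthogonalities involving $J_F(\rho)$, which is divergence free by~\eqref{eqn:divergencefree}. For the pair $-\chi(\rho)\nabla\psi(\rho)$ and $J_F(\rho)$, the identity with $\phi=\psi$ and $w=J_F(\rho)$ has a vanishing bulk term (since $\nabla\cdot J_F=0$) and a vanishing boundary term (since $\psi=0$ on $\partial\Lambda$ by the construction~\eqref{equ:supPsi}). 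For the pair $J_S(\rho)$ and $J_F(\rho)$, the same computation with $\phi=\frac{\delta{\cal V}}{\delta\rho}$ is exactly the calculation already performed in~\eqref{eqn:orth_j_J_S}, which gives $\int_\Lambda J_S(\rho)\cdot\chi(\rho)^{-1}j\;\!dx=0$ for every divergence free $j$; applying it to $j=J_F(\rho)$ yields the orthogonality.

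The remaining orthogonality, between the two gradient currents $J_S(\rho)$ and $-\chi(\rho)\nabla\psi(\rho)$, I would obtain without further integration by parts from the Hamilton--Jacobi relation~\eqref{equ:HJ}. Substituting $J_A=-\chi(\rho)\nabla\psi(\rho)+J_F(\rho)$ from~\eqref{equ:JAdecomp} into~\eqref{equ:HJ} gives
\begin{equation*}
  0 = \int_\Lambda J_A\cdot\chi(\rho)^{-1}J_S\;\!dx
  = \int_\Lambda\bigl(-\chi(\rho)\nabla\psi\bigr)\cdot\chi(\rho)^{-1}J_S\;\!dx
  + \int_\Lambda J_F\cdot\chi(\rho)^{-1}J_S\;\!dx ,
\end{equation*}
and since the second integral vanishes by the previous step, so does the first, which completes all three orthogonalities.

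I expect the main obstacle to be the boundary term $\int_{\partial\Lambda}\frac{\delta{\cal V}}{\delta\rho}\,(J_F\cdot n)\;\!dS$ appearing in the $J_S$--$J_F$ orthogonality. Unlike the term involving $\psi$, it is not killed by a Dirichlet condition on the scalar field, and one must argue that it vanishes under the boundary conditions of MFT (no net flux $J_F\cdot n=0$ for a closed system, a fixed boundary density rendering $\frac{\delta{\cal V}}{\delta\rho}$ constant on $\partial\Lambda$ for reservoir coupling, or the absence of a boundary for a periodic domain). This is precisely the point already finessed in~\eqref{eqn:orth_j_J_S}, so what really needs checking is consistency with that earlier step rather than any new estimate.
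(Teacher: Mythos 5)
Your proposal is correct and follows essentially the same route as the paper: integration by parts against the divergence-free field for the two pairs involving $J_F(\rho)$, and the Hamilton--Jacobi relation~\eqref{equ:HJ} combined with the decomposition~\eqref{equ:JAdecomp} to extract the remaining orthogonality between the two gradient currents. The boundary term you flag for the $J_S$--$J_F$ pair is handled in the paper simply by invoking $(\delta\mathcal{V}/\delta\rho)|_{\partial\Lambda}=0$, as discussed in the MFT review, so your concern is resolved exactly along the lines you anticipate.
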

\begin{proof}
  Consider first the orthogonality between $J_F(\rho)$ and $\chi(\rho)\nabla\psi(\rho)$.  One has $\psi(\rho)|_{\partial\Lambda}=0$ so integration by
  parts yields
\begin{equation*} 
  \int_\Lambda \chi(\rho)\nabla\psi(\rho) \cdot \chi^{-1}(\rho) J_F(\rho) \;\! dx = -\int_\Lambda \psi(\rho) \nabla \cdot J_F(\rho) \;\! dx = 0 
\end{equation*}
where the second equality follows from~\eqref{eqn:divergencefree}.  Hence $J_F(\rho)$ and $\chi(\rho)\nabla\psi(\rho)$ are orthogonal in the sense
of~\eqref{equ:HJ}.

Following the same method but replacing $\psi$ by $\delta {\cal V}/\delta \rho$ shows that $J_F(\rho)$ is orthogonal to
$J_S(\rho)=-\chi(\rho)\nabla(\delta {\cal V}/\delta \rho)$, where we used $(\delta {\cal V}/\delta \rho)|_{\partial\Lambda}=0$, as discussed
in~\cite{Bertini2015a}.

Finally, using the orthogonality relation~\eqref{equ:HJ} and $J_A(\rho)=-\chi(\rho) \nabla \psi(\rho) + J_F(\rho)$ yields
\begin{equation*}
  \int_\Lambda \chi(\rho)\nabla\psi(\rho) \cdot \chi^{-1}(\rho) J_S \;\! dx  = \int_\Lambda J_F(\rho) \cdot \chi^{-1}(\rho) J_S(\rho) \;\! dx.
\end{equation*} 
The right hand side vanishes by orthogonality of $J_S(\rho)$ and $J_F(\rho)$, so $\chi(\rho) \nabla \psi(\rho)$ is orthogonal to $J_S(\rho)$, as required.
\end{proof}

Combining Eq.~\eqref{eqn:currentdecomposition} and Eq.~\eqref{eqn:divergencefree}, the dynamics of the density is given by
\begin{equation}
  \label{eqn:gradientdynamics}
  \partial_t \rho = \nabla\cdot \bigl(\chi(\rho) \bigl[\nabla  \tfrac{\delta \mathcal V}{\delta \rho} 
  + \nabla \psi(\rho)\bigr]\bigr)
\end{equation}
The first term on the right hand side describes steepest descent (gradient flow) of the quasipotential, within a (modified)
Wasserstein metric~\cite{Adams2013a,Jack2014a}.  The second term describes a current that is orthogonal to the gradient flow
(within the same metric), and leads to an evolution of $\rho$ within the level sets of the quasipotential: this is the geometric
result anticipated in Sec.~\ref{sec:markov-discuss}, but in this hydrodynamic setting the geometrical objects are more explicit.

{
We now derive exact formulas for the level-2.5 and level-2 rate functionals based on the splitting in Proposition~\ref{prop:splittingofcurrent}.
\begin{proposition}
Let the level-2.5 large deviation rate functional be given by~\eqref{eqn:level2.5}. Further let $\rho$ be such that equation~\eqref{equ:supPsi} has a unique classic solution (up to a constant) and $j$ such that $\nabla\cdot j= 0$. Then,
\begin{equation}
\begin{split}\label{eqn:prop_ldp_2.5}
I_{2.5}(\rho,j) = &\frac 14 \int_\Lambda (j-J_F(\rho))\cdot\chi(\rho)^{-1}(j-J_F(\rho))\;\!dx\\
&+ \frac 14
  \int_\Lambda  \nabla   \tfrac{\delta \mathcal V}{\delta \rho}\cdot \chi(\rho) \nabla   
  \tfrac{\delta \mathcal V}{\delta \rho}\;\!dx
  + \frac 14\int_\Lambda  \nabla \psi(\rho)\cdot \chi(\rho) \nabla \psi(\rho)\;\!dx.
\end{split}
\end{equation}
Moreover, the level-2 rate functional is given by
\begin{equation}\label{eqn:LDPfunctional2}
I_2(\rho) = \frac 14
  \int_\Lambda  \nabla   \tfrac{\delta \mathcal V}{\delta \rho}\cdot \chi(\rho) \nabla   
  \tfrac{\delta \mathcal V}{\delta \rho}\;\!dx
  + \frac 14\int_\Lambda  \nabla \psi(\rho)\cdot \chi(\rho) \nabla\psi(\rho)\;\!dx.
\end{equation}
\end{proposition}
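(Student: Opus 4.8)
The plan is to expand the quadratic form defining $I_{2.5}$ in~\eqref{eqn:level2.5} using the three-fold decomposition $J(\rho) = J_S(\rho) - \chi(\rho)\nabla\psi(\rho) + J_F(\rho)$ from~\eqref{eqn:currentdecomposition}, exploit the mutual orthogonality established in Proposition~\ref{prop:splittingofcurrent}, and then read off the level-2 functional by carrying out the contraction~\eqref{eqn:2.5to2ldp}. Throughout I would write $\langle A,B\rangle := \int_\Lambda A\cdot\chi(\rho)^{-1}B\,dx$ for the bilinear form appearing in~\eqref{equ:HJ}, so that $I_{2.5}(\rho,j) = \tfrac14\langle j-J(\rho),\, j-J(\rho)\rangle$.

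First I would substitute the decomposition and regroup $j - J(\rho) = \bigl(j - J_F(\rho)\bigr) - J_S(\rho) - \bigl(-\chi(\rho)\nabla\psi(\rho)\bigr)$, treating $u := j - J_F(\rho)$, the field $J_S(\rho)$, and the field $-\chi(\rho)\nabla\psi(\rho)$ as the three summands. Expanding $\langle\cdot,\cdot\rangle$ produces three diagonal terms together with six cross terms. The diagonal terms are exactly the three integrals claimed in~\eqref{eqn:prop_ldp_2.5}: the first is $\tfrac14\langle u,u\rangle = \tfrac14\int_\Lambda(j-J_F)\cdot\chi^{-1}(j-J_F)\,dx$, while $\langle J_S,J_S\rangle = \int_\Lambda \nabla\tfrac{\delta\mathcal V}{\delta\rho}\cdot\chi(\rho)\nabla\tfrac{\delta\mathcal V}{\delta\rho}\,dx$ by~\eqref{equ:JS}, and $\langle -\chi\nabla\psi, -\chi\nabla\psi\rangle = \int_\Lambda \nabla\psi\cdot\chi(\rho)\nabla\psi\,dx$ analogously. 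It then remains to show that all six cross terms vanish.

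This last point is where I expect the main care to be needed, because the cross terms split into two kinds. Those among $J_S(\rho)$, $-\chi(\rho)\nabla\psi(\rho)$, and $J_F(\rho)$ vanish directly by Proposition~\ref{prop:splittingofcurrent}. The remaining cross terms pair the free field $j$ (sitting inside $u$) against $J_S(\rho)$ and against $-\chi(\rho)\nabla\psi(\rho)$, and these are \emph{not} covered verbatim by that proposition, whose ``Moreover'' clause concerns divergence-free fields that vanish on the boundary, whereas here $j$ need not. I would therefore argue them by integration by parts directly: for $\langle j, J_S(\rho)\rangle$ I invoke the computation already recorded in~\eqref{eqn:orth_j_J_S}, which uses only $\nabla\cdot j = 0$ and $(\delta\mathcal V/\delta\rho)|_{\partial\Lambda}=0$; an identical integration by parts, now using $\psi(\rho)|_{\partial\Lambda}=0$, gives $\langle j, -\chi(\rho)\nabla\psi(\rho)\rangle = \int_\Lambda \psi(\rho)\,\nabla\cdot j\,dx = 0$. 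With all six cross terms eliminated, the expansion collapses to exactly~\eqref{eqn:prop_ldp_2.5}.

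Finally, for the level-2 functional I would perform the contraction~\eqref{eqn:2.5to2ldp} over divergence-free $j$. In~\eqref{eqn:prop_ldp_2.5} only the first integral depends on $j$; since $\chi(\rho)^{-1}$ is positive definite this term is non-negative and vanishes precisely when $j = J_F(\rho)$. Because $J_F(\rho)$ is itself divergence free by~\eqref{eqn:divergencefree}, this minimiser is admissible, so the infimum is attained there and the first integral drops out, leaving exactly~\eqref{eqn:LDPfunctional2}.
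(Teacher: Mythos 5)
Your proof is correct and follows essentially the same route as the paper, which simply cites Proposition~\ref{prop:splittingofcurrent} together with the representation~\eqref{eqn:1st_split_2.5} and then minimises over divergence-free $j$ at $j=J_F(\rho)$. You are in fact more careful than the paper's one-line argument in observing that the cross terms pairing $j$ against $J_S(\rho)$ and against $-\chi(\rho)\nabla\psi(\rho)$ are not literally covered by the ``Moreover'' clause of Proposition~\ref{prop:splittingofcurrent} and must be killed by direct integration by parts using $\nabla\cdot j=0$ and the vanishing of $\psi(\rho)$ and $\delta\mathcal V/\delta\rho$ on $\partial\Lambda$ --- exactly the computation recorded in~\eqref{eqn:orth_j_J_S}.
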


\begin{proof}
The proof of equation~\eqref{eqn:prop_ldp_2.5} follows from Proposition~\ref{prop:splittingofcurrent} and the representation of the rate functional \eqref{eqn:1st_split_2.5}. The second result~\eqref{eqn:LDPfunctional2} follows readily as $j=J_F(\rho)$ is the minimiser of~\eqref{eqn:prop_ldp_2.5}.
\end{proof}
Note that these results are consistent with~\eqref{eqn:ldp_2.5_sym_xx} and~\eqref{eqn:ldp_2_sym_xx}, where the minimising current was given by $j=0$. In the general case, the minimising current is given by $j=J_F(\rho)$. 

We moreover can
 recognise the first term on the right hand side of \eqref{eqn:LDPfunctional2} }as $I_2^{\rm rev}(\rho)$, so the second term on the right hand side is an
exact formula for the difference in rate for reversible and irreversible processes.  This shows that the convergence rate for the
irreversible process is strictly faster, unless the force $(-\nabla\psi)$ vanishes.
 We recognise this as a condition that the antisymmetric part of the current contributes to the time derivative of the density (otherwise the convergence to equilibrium of the density can not be accelerated).

Note that the objects $\nabla \tfrac{\delta \mathcal V}{\delta \rho} $ and $\nabla\psi$ should be interpreted as forces acting
in the space of densities.  In order to sustain a large deviation of the density, the stochastic forces within the system must act
to resist these (deterministic) forces.  One sees from~\eqref{eqn:LDPfunctional2} that the probability of this rare event (or
large deviation) is given by the norms of the two forces, within a metric that depends on the mobility $\chi$.

\subsubsection{An example}
\label{sec:example}

We have discussed the status of the MFT as a theory for the hydrodynamic limit of interacting particle systems.  For a concrete
example of this approach, we consider an interacting particle model known as the zero-range process (ZRP)~\cite{Spitzer1970a}.  A
microscopic description of the ZRP is given in Sec.~\ref{sec:zero-range-process}.  For the purposes of this section, the important
features of the ZRP are that its hydrodynamic limit is described by the MFT and that irreversible ZRPs have local quasipotentials
of the form~\eqref{equ:quasi-eq}.  This latter fact allows straightforward comparison between reversible and irreversible models
with the same quasipotential.

The hydrodynamic limit of the ZRP is a non-linear drift-diffusion
\begin{equation}
  \label{eqn:hydro}
  \partial_t \rho  = \Delta\phi(\rho) -  \nabla\cdot\bigl(\phi(\rho)E\bigr),
\end{equation}
where $\phi$ is a function that depends on the local density [that is, $\phi(\rho)(x)=\phi(\rho(x))$], and $E$ is a drift term.
The specific function $\phi$ that appears in the MFT depends on how the particles interact within the ZRP. A formal derivation of
this hydrodynamic limit can e.g. be found in \cite{Bertini2015a}.  If $\phi(\rho)=\rho$, then the model
corresponds to drift-diffusion of non-interacting particles.

One sees immediately from~\eqref{eqn:hydro} that the hydrodynamic current is given by~\eqref{equ:JJ} with $\chi(\rho)=\phi(\rho)I$
and $D(\rho)=\phi'(\rho)I$, where $I$ is the identity matrix.  Moreover, the quasipotential for the ZRP is given
by~\eqref{equ:quasi-eq} with $f'(\rho)=\log\phi(\rho)$, consistent with~\eqref{equ:einstein}.  The ZRP may be either reversible or
irreversible: one sees that reversible ZRPs lead to $E=-\nabla V$ for some potential $V$. In this case~\eqref{equ:einstein} shows
that $V(x)=\log(\phi(\rhobar(x)))+\lambda$, where $\rhobar$ is the steady state density profile and $\lambda$ is a constant
(independent of $x$).  Hence one identifies the irreversible current as
$J_A(\rho)=J(\rho)-J_S(\rho)=\phi(\rho) \left[ E + \nabla \log \phi(\rhobar)\right]$.

Examining the rate function~\eqref{eqn:LDPfunctional2} for the specific case of the ZRP, one can interpret the result as a
generalisation of a result in~\cite{Rey-Bellet2015a}.  One has $\delta {\cal V}/\delta \rho = \log\phi(\rho) - \log\phi(\rhobar)$.
Hence
\begin{equation}
  I_2(\rho) =\int \Bigl(\bigl|\nabla \log\bigl(\tfrac{\phi(\rho)}{\phi(\bar\rho)}\bigr)\bigr|^2 
  +|\nabla\psi(\rho)|^2\Bigr)\phi(\rho)\;\! dx,
\end{equation}
where $\psi$ is the solution of
$\nabla\cdot\bigl(\phi(\rho) \nabla\psi \bigr) = -\nabla \cdot[ \phi(\rho) ( E+\nabla\log\phi(\rhobar))]$.  If we now consider the
special case $\phi(\rho)=\rho$ then we recover the same rate function as in Theorem 2.2 of Ref.~\cite{Rey-Bellet2015a}: the
non-gradient force $C$ in that work is here replaced by $E+\nabla\log\rhobar$ (note that this is independent of $\rho$).  The
condition that $\nabla\cdot(\rhobar C)=0$ -- which ensures that the invariant measure is unchanged by breaking detailed balance --
is satisfied within the MFT because $\nabla\cdot J_A(\rhobar)=0$ and setting $\phi(\rho)=\rho$ yields
$J_A(\rhobar)=\rhobar(E+\nabla\log\rhobar)$.

Note however the setting discussed in this work is different to that in~\cite{Rey-Bellet2015a}: here we consider the hydrodynamic
limit of many particles on a lattice while that work considers a single particle in a compact manifold without boundary.  For
non-interacting particles, the result is the same: the reason that for the many-particle system, the rate function $I^N$
associated with all the particles undergoing the same rare fluctuation is equal to $NI^{1}$.  So the only difference between the
one-particle and many-particle systems arises in the prefactors (speeds) of the large deviation
principles~\eqref{equ:ldp-mft-l2},~\eqref{equ:ldp-mft-l2.5}.

\section{Application to the zero-range process, and numerical results}
\label{sec:numerics}

\subsection{The zero-range process}
\label{sec:zero-range-process}

The ZRP~\cite{Spitzer1970a} is a system in which interacting particles move on a finite lattice
$\Lambda_L=\{0,\dots,L-1\}^d \subseteq \mathbb Z^d$ where $L\in\mathbb N$ is the linear system size.  The particles are assumed to
be indistinguishable and each particle is located at one of the sites $x\in\Lambda_L$.  The number of particles on site $x$ is
$\eta(x)$ and the configurations of the system are $\eta=(\eta(x))_{x\in\Lambda_L}$.  We will assume that the total number of
particles is conserved such that no particles are added or removed over time.

The interaction of the particles is encoded in a function $g(k)$, with $g(0)=0$.  The rate of particle transfer from site $x$ to site $y$ is $g(\eta(x))c(x\!\to\!y)$, where the function $c$
determines the connectivity of the sites. The case $g(k)=k$ corresponds to non-interacting particles.  The model is referred to as
zero-range because particles interact only when they are on the same site. For example, if $g(k)=k^\alpha$ for $k > 0$, then
$\alpha<1$ means particles on the same site attract each other (suppressing jumps away from that site) while $\alpha>1$ means that
particles on the same site tend to repel each other.

\subsubsection{Reversible and irreversible ZRP}
\label{sec:Reversible-irrev-ZRP}

The behaviour of the ZRP depends strongly on the choice of the connectivity function $c$ as well as the interaction function $g$.
We assume that particles hop only to nearest neighbour sites, so $c(x\to y)>0$ only if $x$ and $y$ are nearest neighbours.  At the
boundaries of the lattice, the system has either reflecting boundaries (particles cannot leave the lattice) or periodic
boundaries.

It is easily verified that the model obeys the detailed balance condition~\eqref{eqn:DB} if one takes (for nearest neighbour
sites)
\begin{equation}
  \label{equ:cxy-rev}
  c(x \to y) = {\rm e}^{\frac12[V(x)-V(y)]}
\end{equation}
for some potential function $V$.  In this case the model is reversible.

To arrive at a class of irreversible models, we take
\begin{equation} 
  \label{equ:cxy-irrev}
  c(x \to y) = {\rm e}^{\frac12[V(x)-V(y)]} + k_{x,y} {\rm e}^{V(x)}
\end{equation} 
with $k_{x,y} =-k_{y,x} $.  In this case positivity of transition rates requires $|k_{x,y}|<{\rm e}^{-\frac12[V(x)+V(y)]}$ for all
$x,y$.  We show below that taking $k\neq 0$ corresponds to breaking of detailed balance, in the sense of~\eqref{equ:LSA}.

\subsubsection{Generator and invariant measure}
\label{sec:Gener-invar-meas}

We denote the configuration of the ZRP at time $t$ with $\eta_t$. The generator acts on the test function $f$ as
\begin{equation}
  \label{eqn:DBdynamics}
  \mathcal L f(\eta) = \sum_{x,y\in\Lambda_L} (f(\eta^{x,y})-f(\eta)) g(\eta(x))c(x\!\to\!y).
\end{equation}
Here $\eta^{x,y}$ denotes the configuration obtained from $\eta$ by removing one particle from position $x$ and adding it at
position $y$. If $\eta(x)=0$ we simply set $\eta^{x,y}=\eta$ and hence leave the configuration unchanged.

Note that the ZRP as defined so far is reducible, since the number of particles is a conserved quantity under the dynamics.  This
setting is useful because it is easily verified (directly from the definition~\eqref{eqn:DBdynamics} and using that the invariant
measure $\pi$ satisfies $\sum_\eta \pi(\eta) \LL f(\eta)=0$ for all $f$) that the reversible model with rates defined
in~\eqref{equ:cxy-rev} has a family of invariant measures, { the so called grand-canonical measures, which are parameterised by the chemical potential $\lambda$ and given by
\begin{equation} 
  \label{equ:pi-gc}
  \pi_{\mathrm{grand}}^\varphi(\eta) = \prod_{x\in\Lambda_L} \frac{\varphi(x)^{\eta(x)}}{z(\varphi(x)) g!(\eta(x))}
\end{equation} }
with the fugacity $\varphi(x)=\mathrm e^{-V(x)-\lambda}$ for some $\lambda\in\mathbb R$; the notation $g!(k)$ indicates the generalised factorial
$g!(k):=\prod_{i=1}^k g(i)$ [with $g!(0)=1$] and $z(\varphi)=\sum_{k=0}^\infty \frac{\varphi^{k}}{g!(k)}$ is a normalisation
constant~\cite{Evans2005a,Kipnis1999a}. {We here assume that $V$, $\lambda$ and $g(\cdot)$ are such that $z(\varphi(x))<\infty$ for all $x\in \Lambda_L$. This is in particular the case for any $V$ and $\lambda$, when $g(\cdot)$ satisfies $g(k)\ge ck$ for some constant $c>0$ \cite{Kipnis1999a}.}

On restricting the model to a fixed number of particles $N$, the invariant measure $\pi$ (which is called the canonical measure) can be obtained by a conditioning
of~\eqref{equ:pi-gc}.  Note that~\eqref{equ:pi-gc} has the structure of a product measure. Also if $g(k)=k$ then one recovers the
case of non-interacting particles and the local marginals of~\eqref{equ:pi-gc} are Poisson distributions.

To make the comparison between reversible and irreversible models described in Sec.~\ref{sec:markov-theory}, we require an
irreversible model whose invariant measure is~\eqref{equ:pi-gc}.  Again using that $\sum_\eta \pi(\eta) \LL f(\eta)=0$ for all
$f$, we take $f=\eta(x)$ to be the number of particles on site $x$, from which we see that the irreversible
rates~\eqref{equ:cxy-irrev} are also consistent with the invariant measure~\eqref{equ:pi-gc} if we take
\begin{equation}
  \label{eqn:condition1}
  \sum_{y:y\sim x} (k_{x,y}-k_{y,x})=0~\textrm{ for all }x,
\end{equation}
where the notation $y\sim x$ indicates that sites $x$ and $y$ are nearest neighbours.  (If we imagine a system with just one
particle, this constraint states that the rate of hopping onto site $x$ is balanced by the rate of hopping away from that site.
For the ZRP, this same balance condition ensures that the invariant measure~\eqref{equ:pi-gc} is still valid even for many
interacting particles.)

Finally then, the conditions on the perturbations $k_{x,y}$ required for a meaningful comparison between reversible and
irreversible models can be summarised as:
\begin{equation}
  \label{eqn:condition2}
  \sum_{y:y\sim x} k_{x,y}=0,~\textrm{  }~k_{x,y}=-k_{y,x},~\textrm{ and }~|k_{x,y}|<  \mathrm e^{-(V(x)+V(y))/2}.
\end{equation}
The rates $k_{x,y}$ can be interpreted as elements of a matrix, which coincides (up to the factor $1/2$) with the vorticity
matrix $\Gamma$ introduced in~\cite{Bierkens2015a}.

{ 
In terms of the splitting \eqref{equ:LSA} the symmetric part of the dynamics is given by
  $c_s(x \to y) = {\rm e}^{\frac12[V(x)-V(y)]}$ and the anti-symmetric part by $c_a(x \to y) = k_{x,y} {\rm e}^{V(x)}$, such that the symmetric part (corresponding to $\mathcal L_S$) is independent of $k_{x,y}$.
}

\subsubsection{Hydrodynamic limit}
\label{sec:Hydrodynamic-limit}

The hydrodynamic limit of the ZRP is defined as follows.  For a ZRP on a lattice $\Lambda_L$ with $L^d$ sites, one takes
$N=\lfloor \rho_0 L^d\rfloor$ particles, where $\rho_0$ is the average density.  The lattice $\Lambda_L$ is mapped into the domain
$[0,1]^d$ by identifying each site $x\in\Lambda_L$ with a position $\tilde x \in \Lambda$ with $\Lambda = [0,1]^d$.  Hence the
site $x$ with integer co-ordinates $(i,j,\dots)$ has a position $\tilde x=(i/L,j/L,\dots)$.  Roughly speaking, the density
$\rho_t(\tilde x)$ in the MFT is equal to the typical number of particles on site $x$, and the normalisation of the density is
$\int_\Lambda \rho_t(\tilde x) \;\! d\tilde x=\rho_0$.  The hydrodynamic limit corresponds to a sequence of models in which
$L\to\infty$ at fixed $\rho_0$, so $N\to\infty$.

The hydrodynamic limit corresponds to observing a system on increasingly large length and time scales.  Note that since the number
of sites in $\Lambda_L$ is diverging (proportional to $L^d$) in the hydrodynamic limit, the diffusion constant for a single
particle (in $\Lambda$) vanishes as $L^{-2}$.  For this reason, when the lattice $\Lambda_L$ is mapped into the fixed domain
$\Lambda$, it is also convenient to scale the hop rates for all particles, by taking $c(x\to y)\to L^2 c(x\to y)$.  This ensures
that the diffusive behaviour characteristic of the hydrodynamic limit is observed, and the hydrodynamic limit is consistent with
MFT.

To fix the hop rates between sites in the ZRP, one fixes a smooth potential function $\tilde V \colon \Lambda\to \mathbb{R}$ on the
hydrodynamic scale, and one considers a sequence of ZRPs of increasing sizes $L$ with potential functions
$V(x)=\tilde V(\tilde x)$, where $\tilde x$ is the image in $\Lambda$ of the discrete site $x\in \Lambda_L$.  Similarly one fixes
a vector field $\tilde k \colon \Lambda\to\mathbb{R}^d$ and takes $k_{x,y}=\tilde{k}(\tilde x)\cdot(\tilde y - \tilde x)$ where
the dot indicates a scalar product in $\mathbb{R}^d$.

{The relation between the ZRP and the MFT is discussed in e.g.~\cite{Bertini2002}, \cite{Hirschberg2015} and in the review paper~\cite{Bertini2015a}.} In particular, for both reversible and irreversible
ZRPs one arrives at the situation described in Sec.~\ref{sec:example}.  The hydrodynamic limit~\eqref{eqn:hydro} depends on the
drift function $E\colon\Lambda\to\mathbb{R}^d$ which is given by $E(\tilde x)=-\nabla \tilde V(\tilde x)+\tilde k(\tilde x)$.

The MFT description of the ZRP also depends on a function $\phi$ which can be obtained as the solution of
\begin{equation}
  \rho = \sum_{k=1}^\infty \frac{k\;\!\phi(\rho)^{k}}{z(\phi(\rho)) g!(k)}  .
\end{equation}
We identify the right hand side of this equation as the mean local density associated with the measure~\eqref{equ:pi-gc}, at 
fugacity $\varphi=\phi(\rho)$.
 
{The quasipotential $\mathcal V$ for the ZRP is given by \cite{Bertini2015a},
\begin{equation}\label{eqn:quasipotential_zrp}
  \mathcal V(\rho) = \int_\Lambda \biggl[\rho(x) \log\biggl(\frac{\phi(\rho(x))}{\phi(\bar\rho(x))}\biggr) - \log \biggl(\frac{z(\phi(\rho(x)))}{z(\phi(\bar\rho(x)))}\biggr)\biggr] dx.
\end{equation}
}
\subsection{Simulation results}
\label{sec:Simulation-results}

We present numerical results for one-dimensional and two-dimensional systems, showing how breaking detailed balance (that is,
taking $k_{x,y}\neq0$ in~\eqref{equ:cxy-irrev}) accelerates convergence to equilibrium. The simulations are performed using the
Gillespie algorithm~\cite{Gillespie1977a}.  The results illustrate several aspects of the theoretical analysis in
Sec.~\ref{sec:theory}.  First, the results of that section do not rely on how detailed balance is broken: we show that there are
several possible choices and discuss their consequences.  Second, our numerical results show in what contexts we expect to see
significant acceleration of the dynamics on breaking detailed balance, and in what contexts we expect the acceleration to be mild.

In all cases, we show results that are scaled to be consistent with the hydrodynamic limit. That is, we map the lattice
$\Lambda_L$ into $[0,1]^d$ and we rescale the microscopic hop rates by a factor of $L^2$ so as to recover diffusive behaviour in
the hydrodynamic limit.

{%
In practical situations where the rate of convergence to equilibrium is important, a common situation is that the potential function $V$ is not convex, but includes several (or many) minima, separated by high barriers.  From a physical perspective, the temperature of our systems is a parameter that has been absorbed into the function $V$. In general, high barriers are linked with long (Arrhenius) time scales that are proportional to ${\rm e}^{\Delta V}$.  In order to understand whether breaking detailed balance can accelerate convergence in such non-convex problems, we consider cases where the function $V$ has two minima, with longest time scale in the system corresponding to motion between these minima.
}

%In both one- and two-dimensional cases we consider double-well potentials, which include a global minimum and a second (local)
%minimum.  The two minima are separated by a barrier and the slowest relaxation mode of the system is characterised by transfer of
%particles between the minima, which requires crossing of the barrier.  Such barrier-crossing events often result in slow
%convergence to equilibrium in physical systems, so it is useful to consider the possibility that breaking detailed balance can
%accelerate convergence in this case.

\subsubsection{Characterisation of convergence}
\label{sec:Char-conv}

{We perform numerical simulations starting from a fixed (deterministic) initial condition $\eta_0$.}
To analyse convergence to equilibrium, we perform numerical simulations of the ZRP, and we track the time-dependence of several
different quantities.  For any configuration $\eta$, the mean potential energy is
\begin{equation} 
  \langle \eta, V\rangle = \sum_{x\in\Lambda_L} \eta(x) V(x). 
\end{equation}
%{For a fixed initial measure $\mu_0$ equal to a deterministic initial configuration (i.e. $\mu_0$ equal to a Dirac mass for a fixed initial configuration with a fixed number of particles)}, 
We generate several trajectories (sample paths) $\eta_t$ of the ZRP and we
estimate the mean potential energy
\begin{equation} 
  \hat{V}(t)=\mathbb{E}_{\mu_0}(\langle \eta_t, V\rangle)
\end{equation} 
by taking the mean value of $\langle \eta_t, V\rangle$ over these trajectories. { For systems of non-interacting particles (where $\phi(\rho)=\rho$), we also
 estimate the macroscopic relative entropy as
\begin{equation}\label{eqn:relent_000}
  D(t) = \sum_{x\in\Lambda_L} \mathbb{E}_{\mu_0}(\eta_t(x)) \log \biggl(\frac{\mathbb{E}_{\mu_0}(\eta_t(x))}{\mathbb{E}_\pi(\eta(x))}\biggr), 
\end{equation}
which can be seen as an approximation to the quasipotential, which is for an independent random walk given by
\[
\mathcal V(\rho_t) = \int_\Lambda \biggl[\rho_t(x) \log\Bigl(\frac{\rho_t(x)}{\bar\rho(x)}\Bigr) + \rho_t(x) - \bar\rho(x) \biggr]\;\!dx
=\int_\Lambda \rho_t(x) \log\Bigl(\frac{\rho_t(x)}{\bar\rho(x)}\Bigr) \;\!dx,
\]
where we used the fact that $z(\varphi) = \mathrm e^{-\varphi}$ in \eqref{eqn:quasipotential_zrp} and the last identity follows from the fact that the density is conserved: $\int_\Lambda \rho_t(x) dx = \int_\Lambda \bar\rho(x) dx$.
}

For numerical purposes, we estimate $\mathbb{E}_{\mu_0}(\eta_t(x))$ as
the average occupancy of site $x$ over the sample paths that we generate, and we calculate $\mathbb{E}_\pi(\eta(x))$ by direct
construction of the invariant measure (whenever possible). Finally, we estimate the Gibbs entropy
\begin{equation}
  S(t) = -\sum_x \mathbb{E}_{\mu_0}(\eta_t(x)) \log \mathbb{E}_{\mu_0}(\eta_t(x)), 
\end{equation}
which is large if particles are delocalised throughout the system, and small if they are concentrated on a small number of sites.
Again, we estimate $\mathbb{E}_{\mu_0}(\eta_t(x))$ as the average occupancy of site $x$ over the sample paths that we generate,
which provides an estimator of $S$.

These three quantities $\hat V,D,S$ all converge as a function of time to stationary values, providing differing information as to
the rates of convergence.  Note that for non-interacting particles, $\rhobar(x)=\mathbb{E}_\pi(\eta(x))={\rm e}^{-V(x)}/z$ for
some constant $z$, so $D(t) = -S(t)+\hat{V}(t) + \log z$.

\subsubsection{One-dimensional case -- results}
\label{sec:1d}

{We consider periodic boundaries for a model on a one-dimensional strip, this is equivalent to motion
on the perimeter of a circle (flat torus in one dimension). 
In this case condition~\eqref{eqn:condition2} requires $k_{x,x+1} =k_{x-1,x}$, so we set $k_{x,x+1}=c$ with some constant $c$ that is independent of $x$. 
The choice $c>0$ corresponds to a fixed force $c\;\!\mathrm e^{V}$
that is forcing the particles to travel around the circle.  For a hydrodynamic limit consistent with macroscopic fluctuation theory, we require $c$ to vary with the system size $L$ as $c=E/L$ with $E$ a fixed constant~\cite{Bertini2015a}.

We note in passing that the use of periodic boundaries is essential for breaking balance in these closed systems: on a finite strip with reflecting boundary conditions, \eqref{eqn:condition2} has no solutions except $k_{x,y}=0$ so there is no way to break detailed balance.}

Thus, returning to the case with the periodic boundaries, the generator is
\begin{multline}
  \label{eqn:generator_1d}
  \mathcal L f(\eta) = \sum_{x=0}^{L-1} \Bigl[\bigl(f(\eta^{x,x+1})-f(\eta)\bigr)\;\!L^2g(\eta(x))\bigl( \mathrm e^{(V(x)-V(x+1))/2} 
  + {(E/L)}%c\;\!
  \mathrm e^{V(x)}\bigr)\\
  +\bigl(f(\eta^{x,x-1})-f(\eta)\bigr)\;\!L^2g(\eta(x))\bigl( \mathrm e^{(V(x)-V(x-1))/2} - {(E/L)}%c\;\!
  \mathrm e^{V(x)}\bigr)\Bigr],
\end{multline}
where the addition is periodically extended on $\Lambda_L=\{0,\dots,L-1\}$, i.e., $(L-1)+1=0$ and $0-1=L-1$. We take $g(k)=k$ so
that the particles do not interact.  The potential is
\begin{equation}\label{eqn:1dpotential}
  V(x)=A \sin(4\pi x/L)- B \cos(2\pi x/L)
\end{equation}
with $A=3/2$ and $B=3/4$ so that the global minimum of the potential is at $\hat x \approx 0.888$ with $V \approx -2.052$. The
height of the barrier is approx $2.609$.  The initial condition has all particles on a single site, $x_0=L/4$, in the vicinity of
the secondary minimum.  The stationary state has $\rhobar(x)=\mathbb{E}_\pi(\eta(x)) \propto {\rm e}^{-V(x)}$ with a
proportionality constant determined by the total density (which in this case is $z\approx 2\;\!377)$. {The parameter $E$ in
Eq.~\eqref{eqn:generator_1d} is set to $E=36$. For the lattice size $L=300$, the maximal value allowed for $E$ to ensure that $c_s+c_a\ge 0$ is slightly above $38.4$. In principle one can choose larger values for $E$ by increasing the lattice size $L$.

 The results in Fig.~\ref{img:1Dno1} are for a domain of size $L=300$; we also compared this to simulations for
  $L=150$, $L=300$ and $L=450$ for the value $E=18$ (to ensure positiveness of the transition rates for $L=150$). We found the results to be qualitatively very similar, see the bottom right panel in
  Fig.~\ref{img:1Dno2}.}  Fig.~\ref{img:1Dno1} shows the convergence to equilibrium of the mean potential energy and the entropy.
One sees that convergence of both the energy and the entropy is significantly faster when detailed balance is broken.  To
illustrate the mechanism for this effect, Fig.~\ref{img:1Dno2} shows how the mean density $\mathbb{E}_{\mu_0}(\eta_t(x))$ varies
with time.  In the irreversible case, the non-gradient part of the drift force $E$ acts to the right and is equal to
$c\;\!{\rm e}^V$, so it is large near the maxima of the potential.  This prevents the system from becoming localised in the
secondary (local) minimum and aids convergence to the steady state.  By contrast, in the reversible system, the particles need to
\emph{diffuse} over the maxima of the potential, which is a slower process.  This difference explains the much faster convergence to the
steady state observed in Fig.~\ref{img:1Dno1}.  The overshoot of the entropy for the reversible case in Fig.~\ref{img:1Dno1}
occurs because the state where the particles are distributed evenly between the two minima has a higher entropy $S$ than the
steady state (where they are localised primarily in the global minimum).  {The state where the particles are distributed evenly between the minima is
an example of a situation where the gradient of the free
energy is small (within the relevant metric), so that steepest descent of the free energy leads to slow changes in the density.}

\begin{figure}
  \center
  \includegraphics[width=5.8cm]{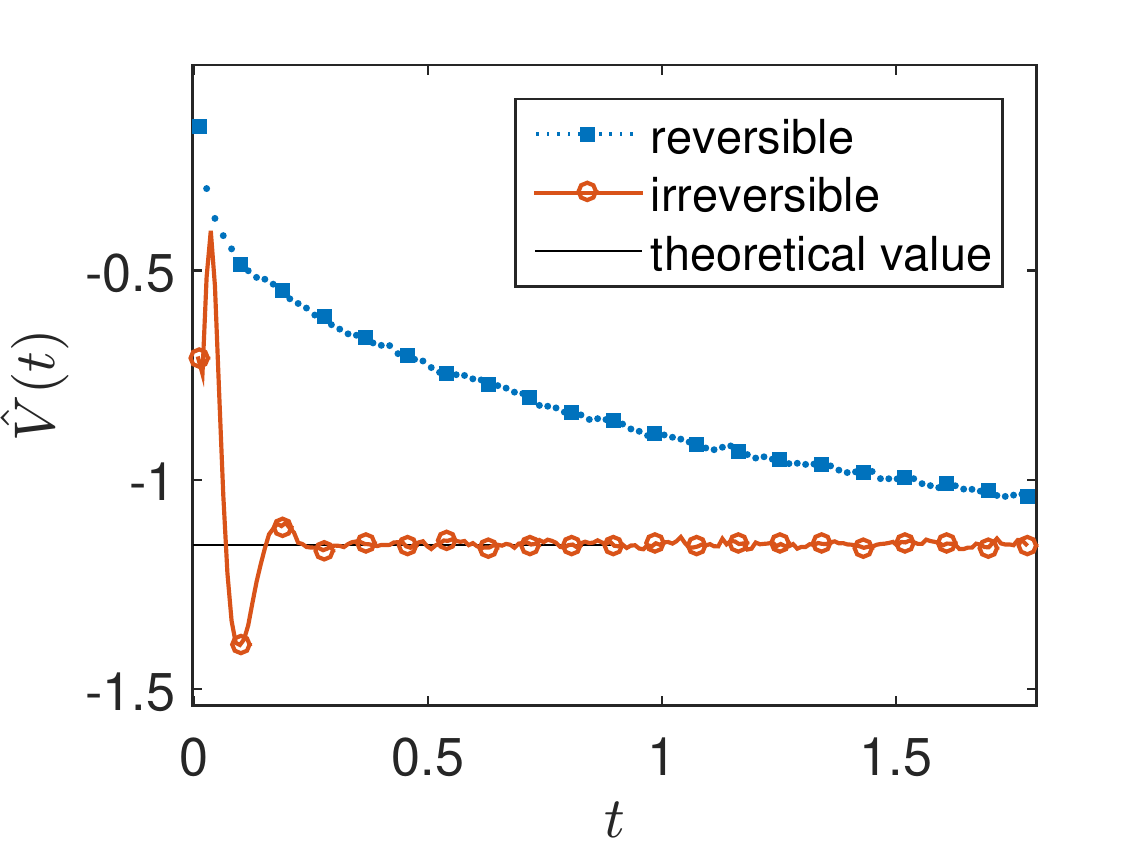}
  \includegraphics[width=5.8cm]{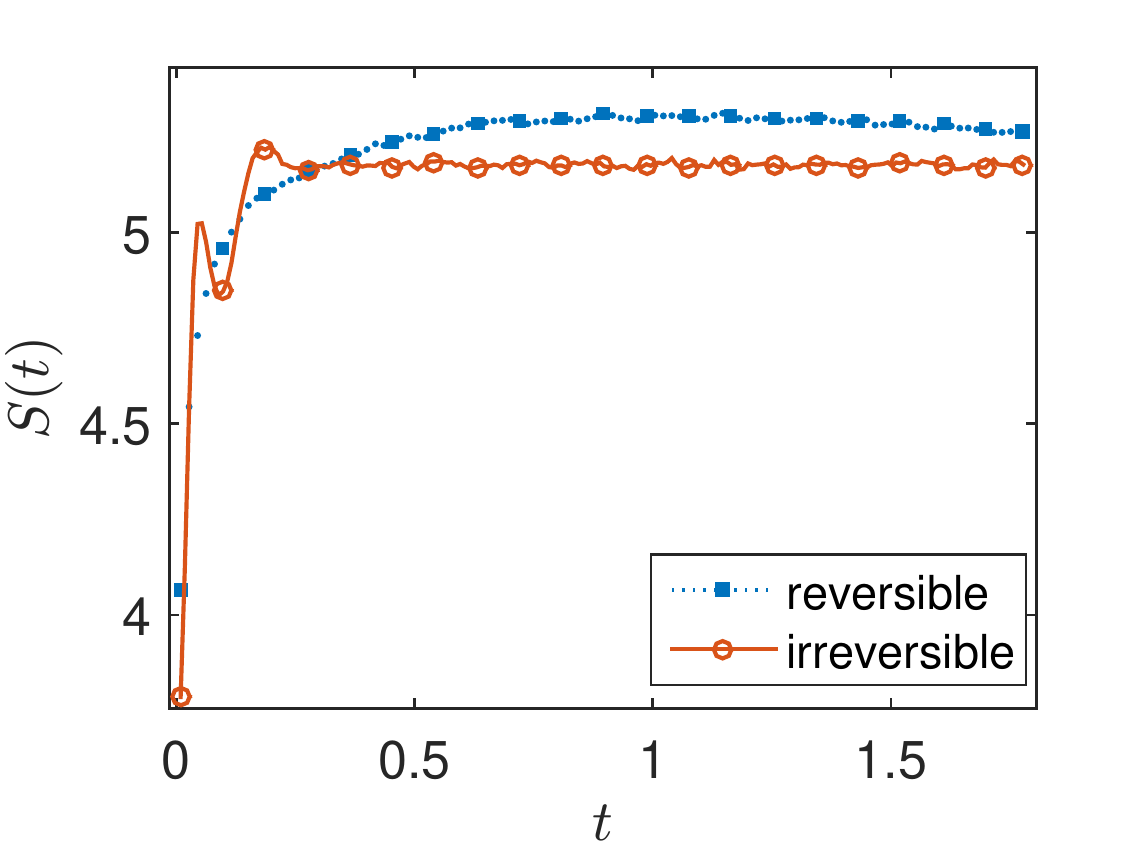}
  \includegraphics[width=5.8cm]{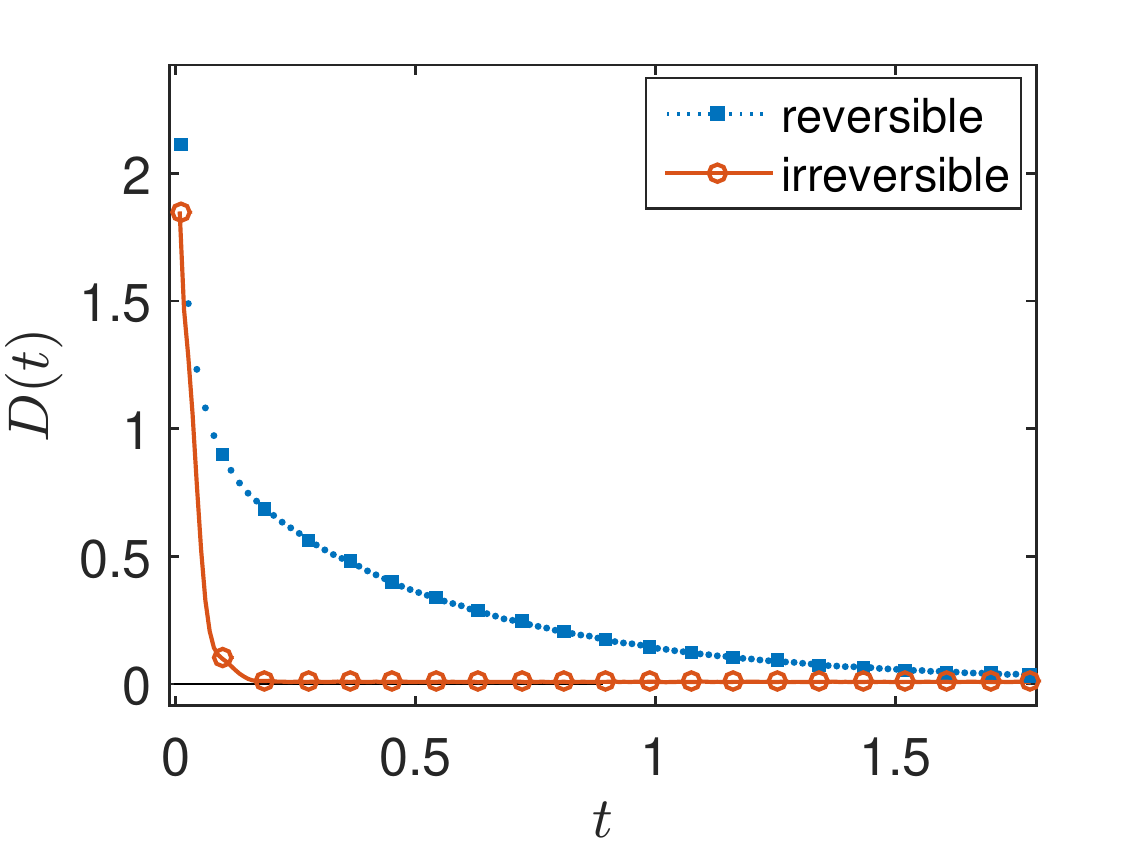}
  \includegraphics[width=5.9cm]{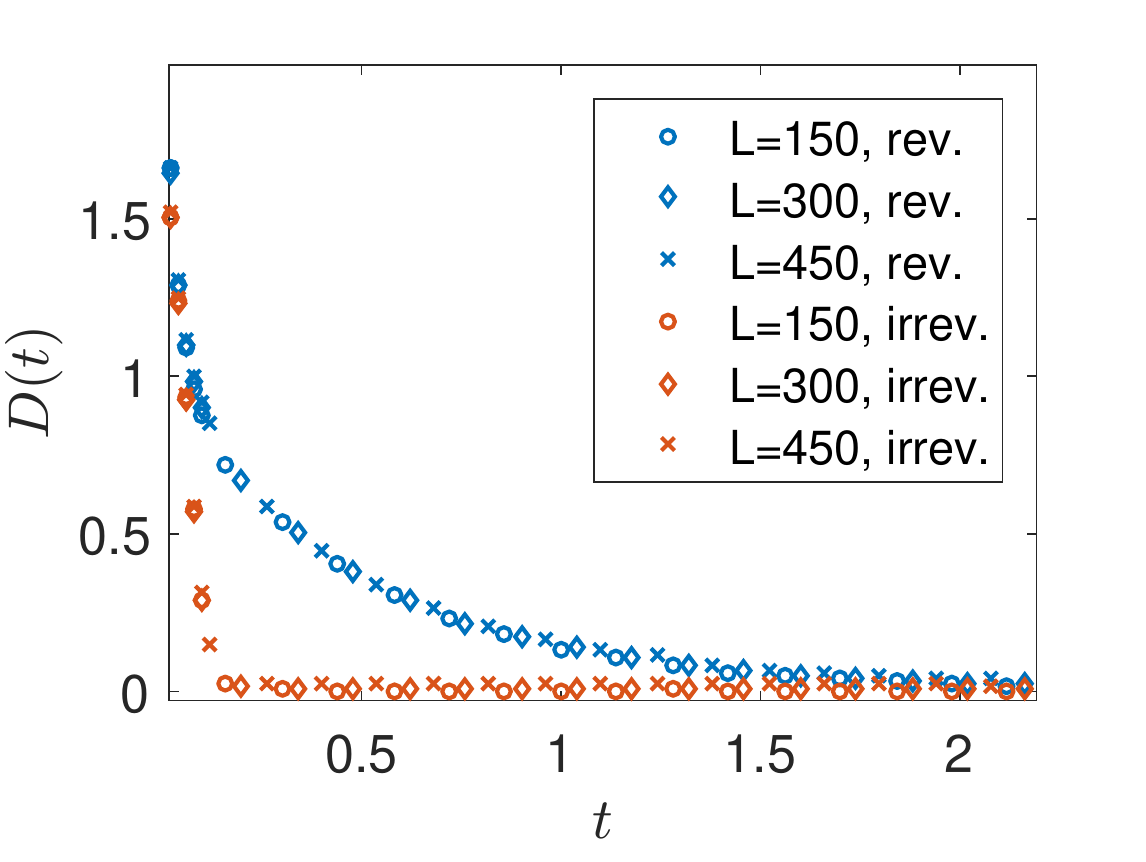}
  \begin{minipage}[b]{0.95\linewidth}
    \caption{\label{img:1Dno1}One-dimensional simulation for independent particles on a circle with $L=300$ sites, comparing
      reversible and irreversible drift-diffusion processes as described in the main text with the potential
      \eqref{eqn:1dpotential}.  Top row and bottom left: Plot of the test observables average energy $\hat{V}$, Gibbs entropy $S$
      and relative entropy $D$ {for $E=36$}.  %$c=0.12$. 
      { Bottom right: Plot of the relative entropy $D$ for different system
        sizes $L=150,300,450$, all for {$E=18$.  As predicted by the hydrodynamic equation, varying the system size at fixed $E$ and rescaling time by a factor of $L^2$ leads to limiting behaviour independent of $L$.}}
        %c=$180/L$ (i.e. $c=0.12, 0.06, 0.04$, respectively). } 
        All results were obtained by
      averaging over $20,000$ individual particle trajectories.}
  \end{minipage}
\end{figure}

\begin{figure}
  \center
  \hspace*{-60pt}\includegraphics[width=15.75cm]{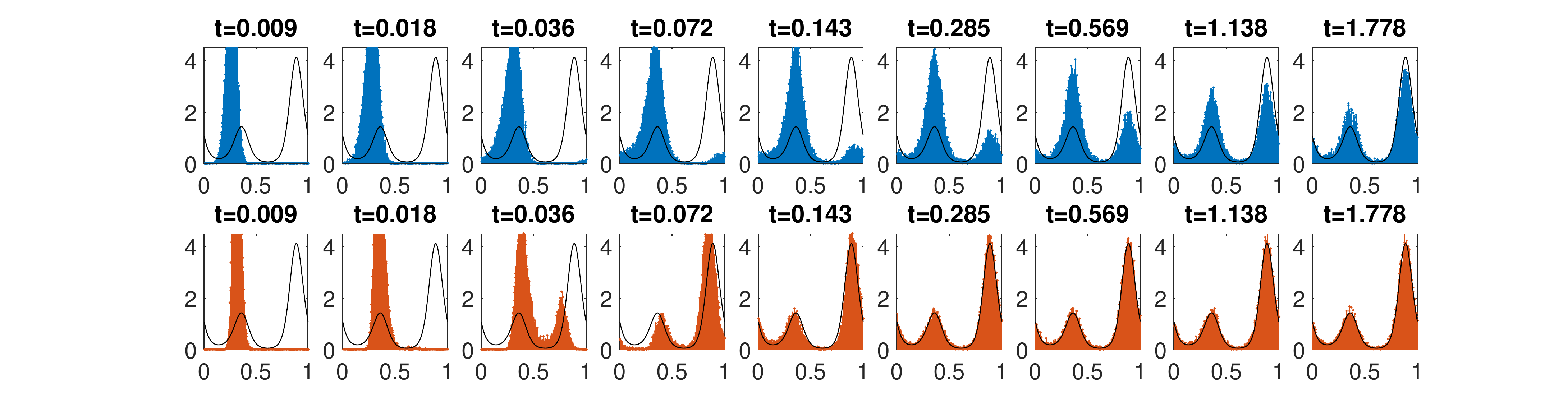}
  \begin{minipage}[b]{0.95\linewidth}
    \caption{\label{img:1Dno2}One-dimensional simulation for independent random walk on a circle with the potential
      \eqref{eqn:1dpotential}. Configuration at different times for the reversible (top row) and the irreversible (bottom row)
      process with drift `to the right' and steady state (in black). x-axis: Position. y-axis: Averaged number of particles.  {In the irreversible case, $E=36$.}}
  \end{minipage}
\end{figure}

{ Note also that \eqref{eqn:relent_000} implies that $D(t)\to0$ at long times, as the system converges to its steady
  state.  However, in Fig. \ref{img:1Dno1} one sees that our estimate of $D(t)$ converges instead to a small positive constant.
  This offset arises because our estimator of $D(t)$ is biased: it is based on $m$ independent numerical simulations (each with
  $N$ particles) and the expectation value of our estimator converges to $D(t)$ only as $m\to\infty$.  Specifically, we estimate
  $\mathbb E_{\mu_0}(\eta_t(x))$ as $\vartheta_t(x) = m^{-1}\sum_{k=1}^m\eta^k_t(x)$ where $\eta^k_t(x)$ is the number of
  particles on site $x$ at time $t$ in the $k$-th simulation.  Inserting this estimate into the (nonlinear) expression
  \eqref{eqn:relent_000}, it is easily shown that the resulting estimator of $D(t)$ has in general a finite bias. However, as
  $m\to\infty$, $\vartheta$ obeys a law of large numbers and converges almost surely to $\mathbb E_{\mu_0}(\eta_t(x))$ -- hence
  our estimator converges to $D(t)$ as $m\to\infty$.  }

\subsubsection{One-dimensional case -- discussion}

This one-dimensional model is useful for illustrative purposes and establishes the general principles derived in Sec.~\ref{sec:theory}.  
{%
However, the restriction to one dimension means that detailed balance can only be broken by applying a driving force $c\,{\rm e}^{V}$ (otherwise the invariant measure would be changed).   If barriers are large, one sees that the driving force near the top of the barrier must be very large indeed: it is hard to see how this can be realised in practical applications.  Physically, the idea is to drive a constant current around the periodic system, and this requires the drift velocities (and hence forces) to be largest at the top of any barriers, where the density is least.  In this sense, it is perhaps not surprising that by applying large forces to quickly drive particles over all barriers in the system, one can significantly speed up mixing of the particles between the two minima of the potential.

For these reasons, we turn to a two-dimensional system, where there are many more ways of breaking detailed balance while preserving the same invariant measure.
}

%However,
%the anti-symmetric rates $k_{x,y}$ depend on the absolute value of the potential $V(x)$ where one expects on physical grounds that
%properties of physical systems should depend only on potential gradients.  Also, in regions where the potential is large, one
%finds very large anti-symmetric contributions to the rates, which may be be inconvenient numerically and hard to achieve in
%physical systems.  However, this choice of anti-symmetric rates is the only possibility in one dimension.  For this reason, we now
%consider a two-dimensional system.

\subsubsection{Two dimensional case -- model and results}
\label{sec:2d}

In two dimensions, there is considerably more freedom in the choice of the rates $k_{x,y}$. If one again assumes periodic
boundaries, it is always possible to have all non-gradient forces acting in a single direction: for example $k_{x,x+e_1}=c$ where
$e_1$ is a lattice vector, as in the previous one-dimensional example.  {However, this requires driving forces that depend exponentially on the value of the potential, as in one dimension.  We therefore pursue a different strategy.}

\begin{figure}
  \center
  \includegraphics[width=7cm]{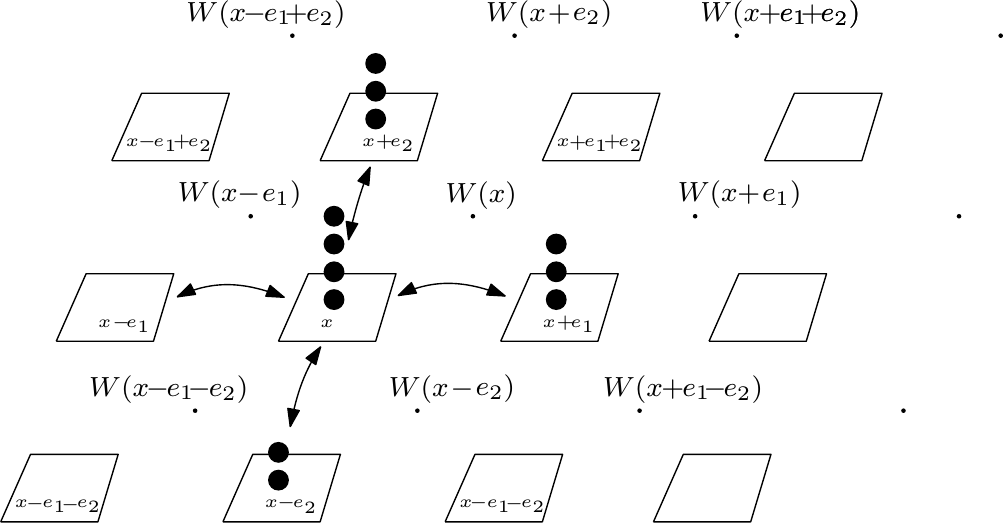}
  \caption{\label{fig:plaquettes}The function $W$ defined on the plaquettes; $e_1$ and $e_2$ are the Euclidean basis vectors.}
\end{figure}

Denoting the Euclidean basis for $\Lambda_L$ with $e_1,e_2$, Eq.~\eqref{eqn:condition2} implies that both
$k_{x,x\pm e_j}= -k_{x\pm e_j,x}$ and $k_{x,x+e_1}+k_{x,x-e_2}+k_{x,x-e_1}+k_{x,x+e_2} =0$ have to be satisfied. One way to choose
appropriate $k_{x,y}$ is to consider the plaquettes of the square lattice as in Fig.~\ref{fig:plaquettes} and to define a
\emph{vorticity} $W$ at the centre of each plaquette.  The value of $W$ on the plaquette centred at $x+\frac12(e_1+e_2)$ is
$W(x)$.  One then can choose the rates $k_{x,y}$ as the following differences:
\begin{align}
  k_{x,x+e_1}&= W(x\!-\!e_2)-W(x) \label{eq:plaquette}
  \\
  k_{x,x-e_2}&= W(x\!-\!e_1\!-\!e_2)-W(x\!-\!e_2)
               \notag  \\
  k_{x,x-e_1}&= W(x\!-\!e_1)-W(x\!-\!e_1\!-\!e_2)
               \notag \\
  k_{x,x+e_2}&= W(x)-W(x\!-\!e_1) \notag
\end{align}
This choice satisfies both conditions $k_{x,y}=-k_{y,x}$ and $\sum_y k_{x,y}=0$.  The quantity $W$ can be identified as a
vorticity, in the sense that taking $W(x)=W_0\delta_{x_0,x}$ with $W_0>0$ causes particles to circulate clockwise around plaquette
$x_0$.

\begin{figure}\label{img:potential}
  \center
  \hspace*{-38pt}\includegraphics[width=14cm]{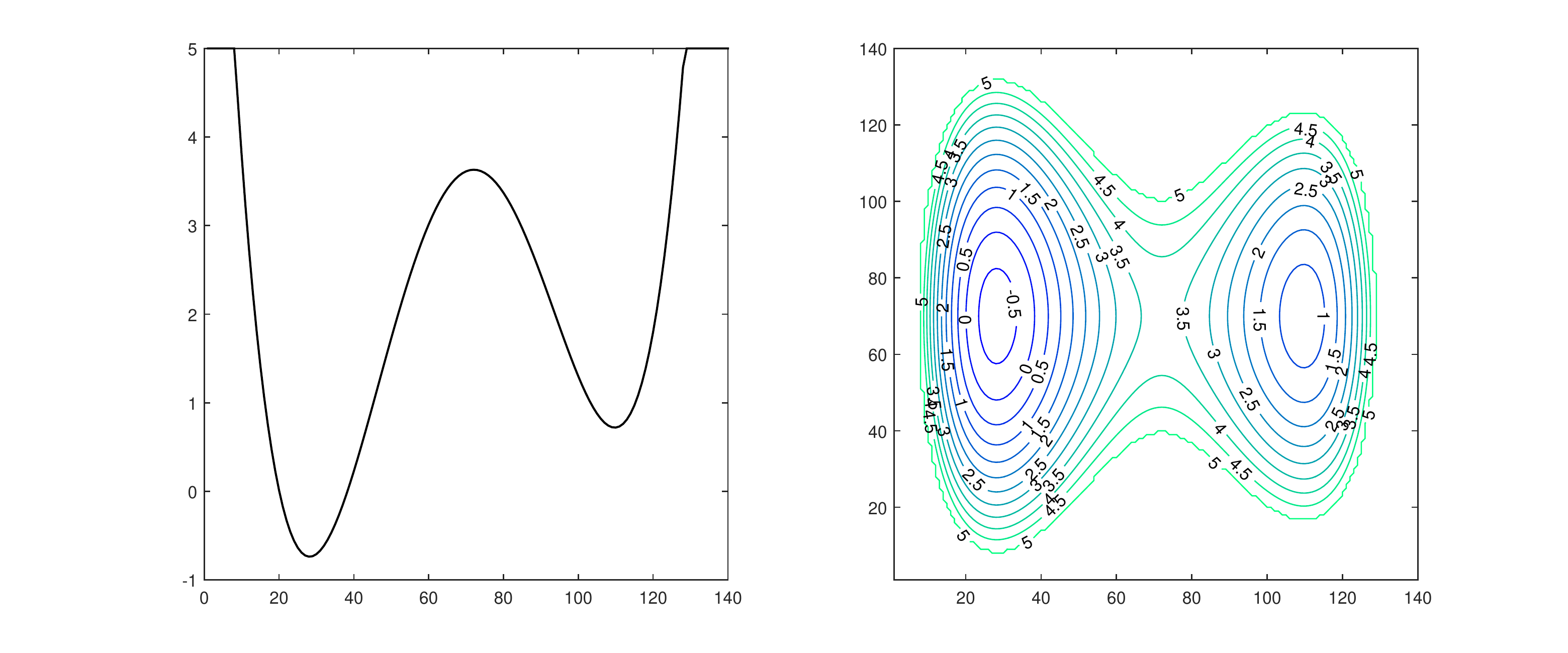}
  \begin{minipage}[b]{0.9\linewidth}
    \caption{\label{fig:levelset}Left: $x_1$-cross-section of $V(x)$ as given in \eqref{eqn:2dpotential} for $x_2=1/2$. Right: Level sets of $V(x)$.}
  \end{minipage}
\end{figure}

Any choice of the function $W$ is possible, and should lead to acceleration of the dynamics, following the theoretical analysis of
Sec.~\ref{sec:theory}.  Here we concentrate on a case where $W$ is related to the potential $V$, so that the rates $c(x\to y)$
depend only on the gradients of the potential in the vicinity of site $x$.  {(The physical idea is that particle motion is naturally sensitive to local potential gradients since these correspond to forces acting on the particles.  On the other hand, the motion of a particular particle should not be sensitive to the total energy $V$, since this depends on the state of the system far away from that particle.)}
To arrive at forces that depend only on potential gradients, we take
$W(x)=a\cdot \exp(\frac14[V(x)+V(x+e_1)+V(x+e_2)+V(x+e_1+e_2)])$, where $a$ is a parameter that sets the scale of the vorticity.

On taking the hydrodynamic limit, this gives rise to the driving force
\begin{equation}
  \label{eqn:drivingforce}
  E(\tilde x) = -\nabla \tilde V(\tilde x) + a[ e_1 \nabla_2 \tilde V(\tilde x) - e_2 \nabla_1 \tilde V(\tilde x) ] , 
\end{equation}
where $a>0$ (recall from Sec.~\ref{sec:Hydrodynamic-limit} that $\tilde x$ is the image in $\Lambda$ of the discrete site
$x\in \Lambda_L$).  We recognise the second term on the right hand side as a force that is obtained by rotating $\nabla V$
clockwise by $\pi/2$ radians, so that it acts to drive the system around the level sets of $V$.

\begin{figure}
  \center
  \hspace*{-1.8cm}\includegraphics[width=15cm]{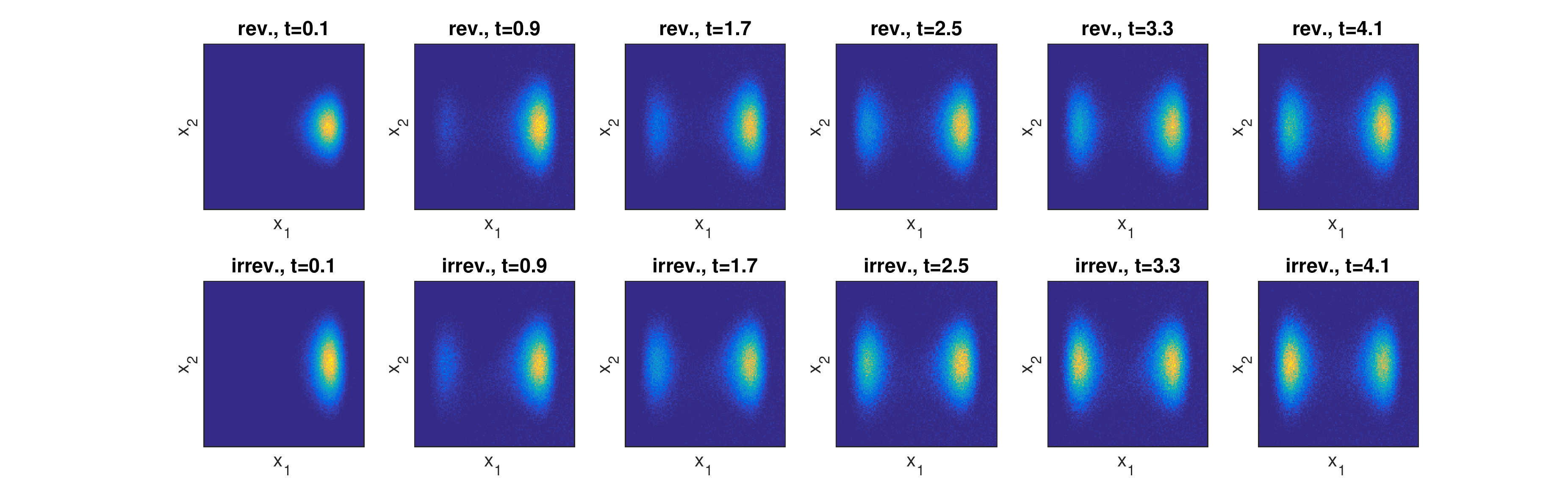}
  \begin{minipage}[b]{0.95\linewidth}
    \caption{\label{img:timeevolution1}Configuration for $g(k)=k$ with the potential \eqref{eqn:2dpotential} at different
      times. (Dark) blue means low number of particles, yellow means many particles. Top row: reversible process. Bottom row:
      irreversible process.}
  \end{minipage}
\end{figure}

\begin{figure}
  \center
  \includegraphics[width=5.8cm]{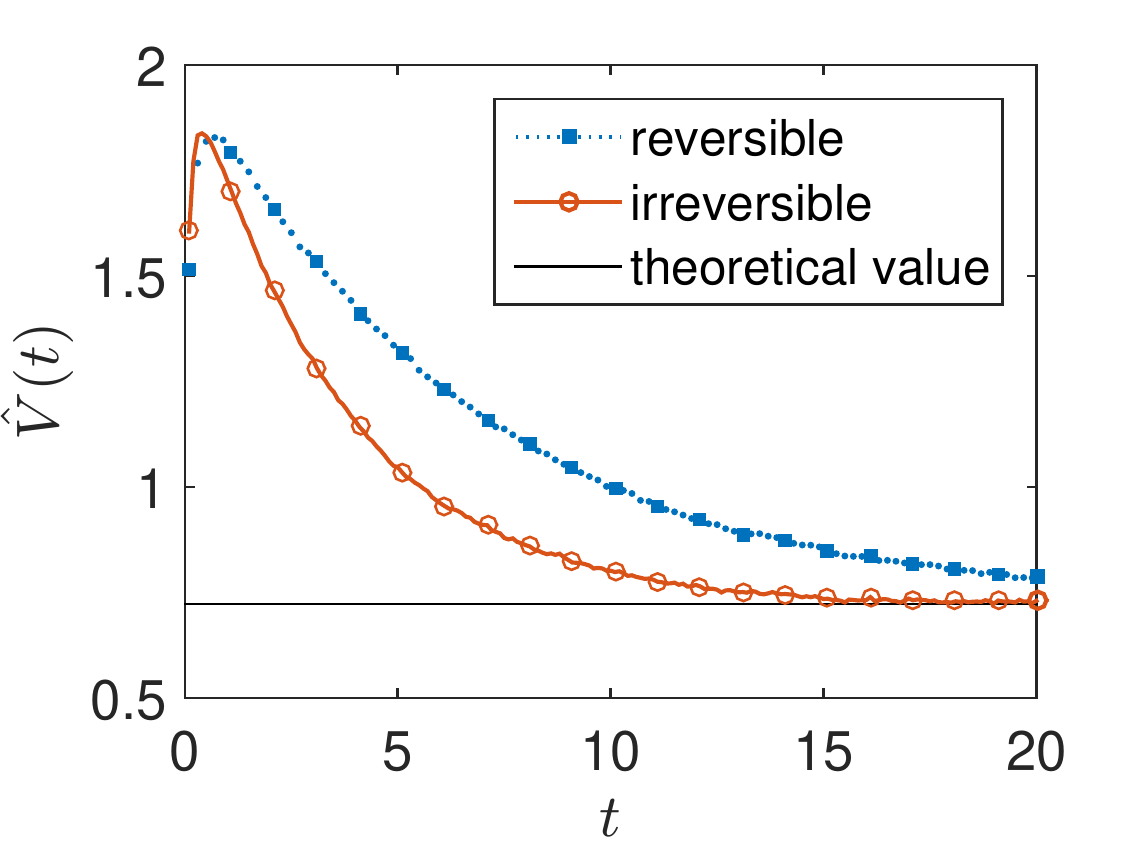}
  \includegraphics[width=5.8cm]{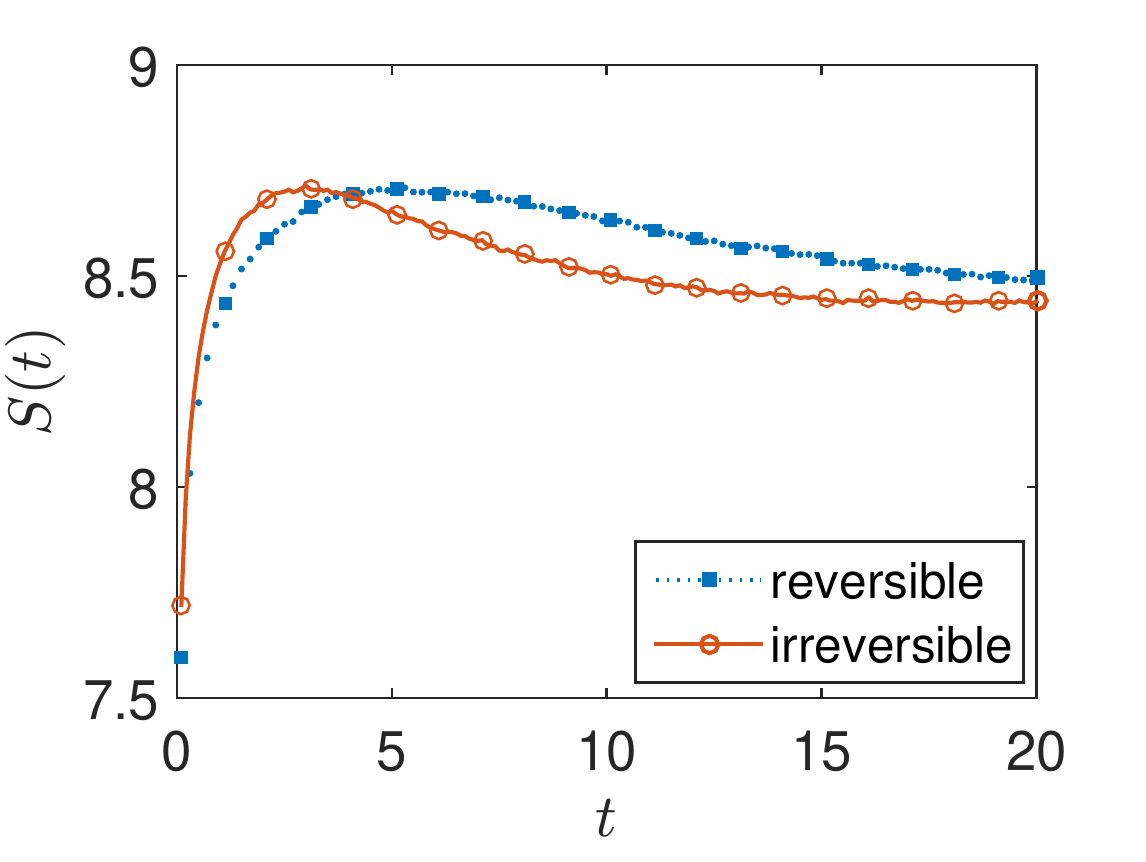}
  \includegraphics[width=5.8cm]{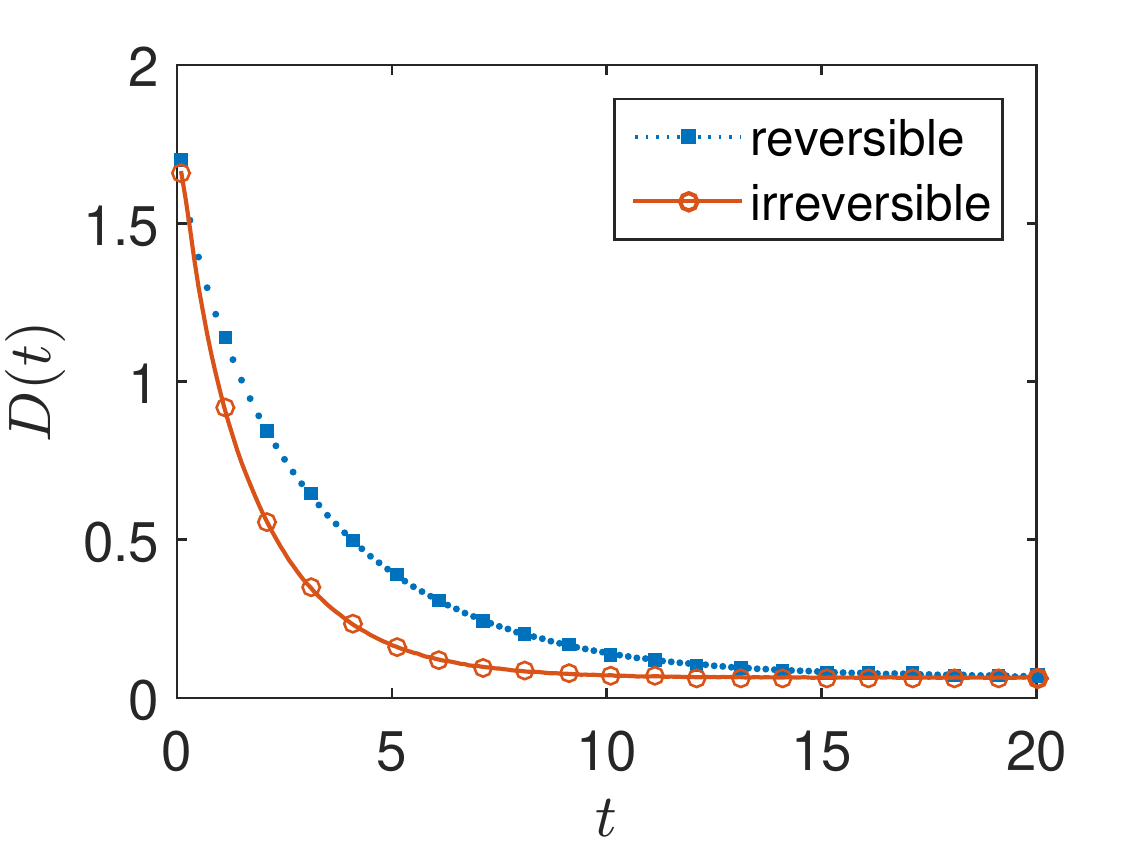}
  \includegraphics[width=5.8cm]{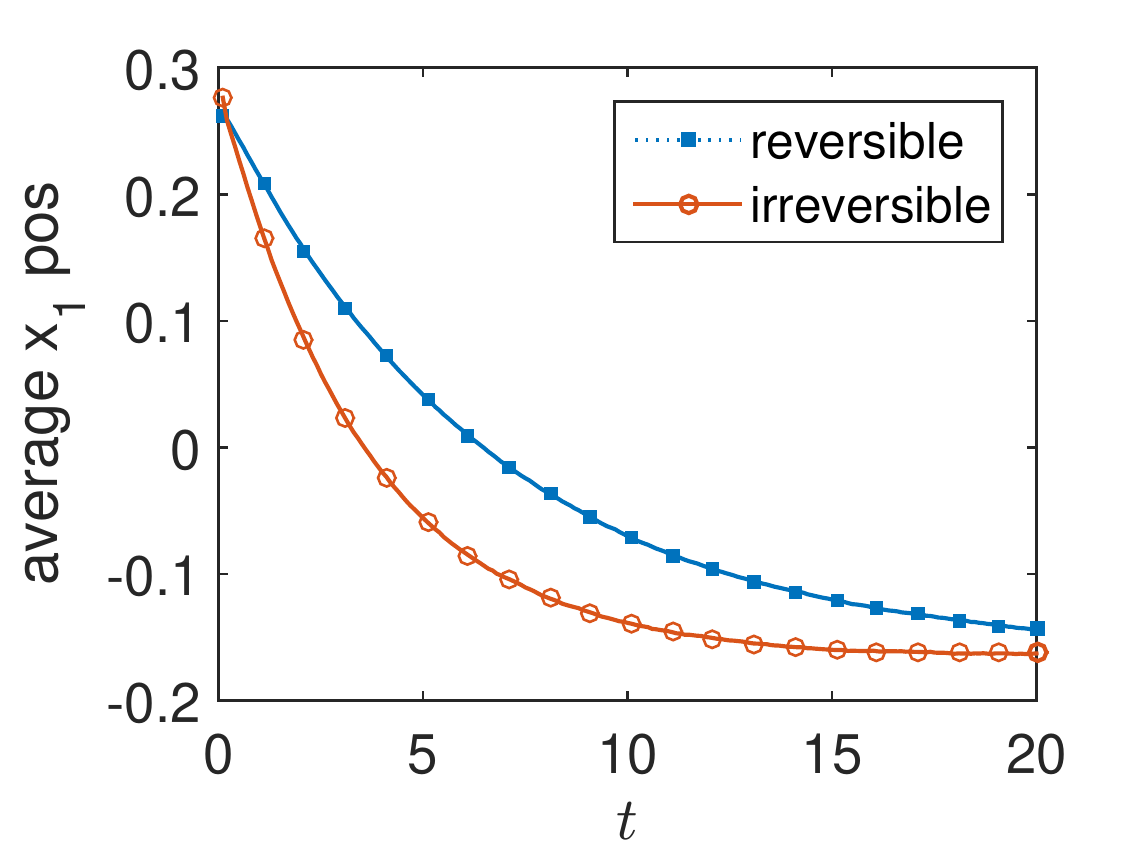}
  \begin{minipage}[b]{0.95\linewidth}
    \caption{\label{img:2dindependent} ZRP with $g(k)=k$ (independent random walk) with initial position of all particles in the
      local minimum. Average energy $\hat V$, the Gibbs entropy $S$, the relative entropy $D$ and the average
      $x_1$-position of particles. The initial position of the particle is at a fixed position in the local (but not global)
      minimum of the potential \eqref{eqn:2dpotential}. The domain size is $L^2=140^2$ and we averaged over $16$ simulations consisting of  $9800$ particles each.}
  \end{minipage}
\end{figure}

The following simulations are on a two dimensional closed domain with $L=140$ and zero flux at the boundary, i.e., the domain has
$140\times 140=19\;\!600$ sites and the particles cannot leave the domain.

\begin{figure}
  \center
  \includegraphics[width=5.8cm]{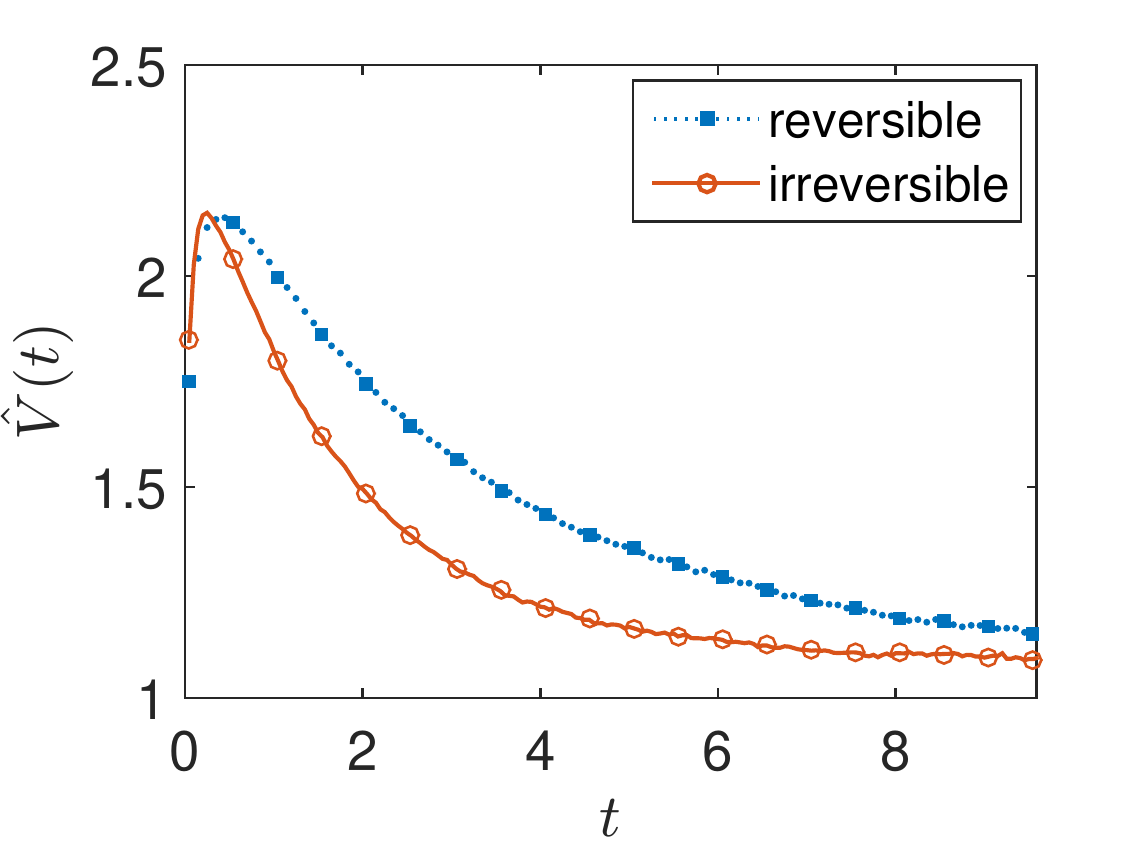}
  \includegraphics[width=5.8cm]{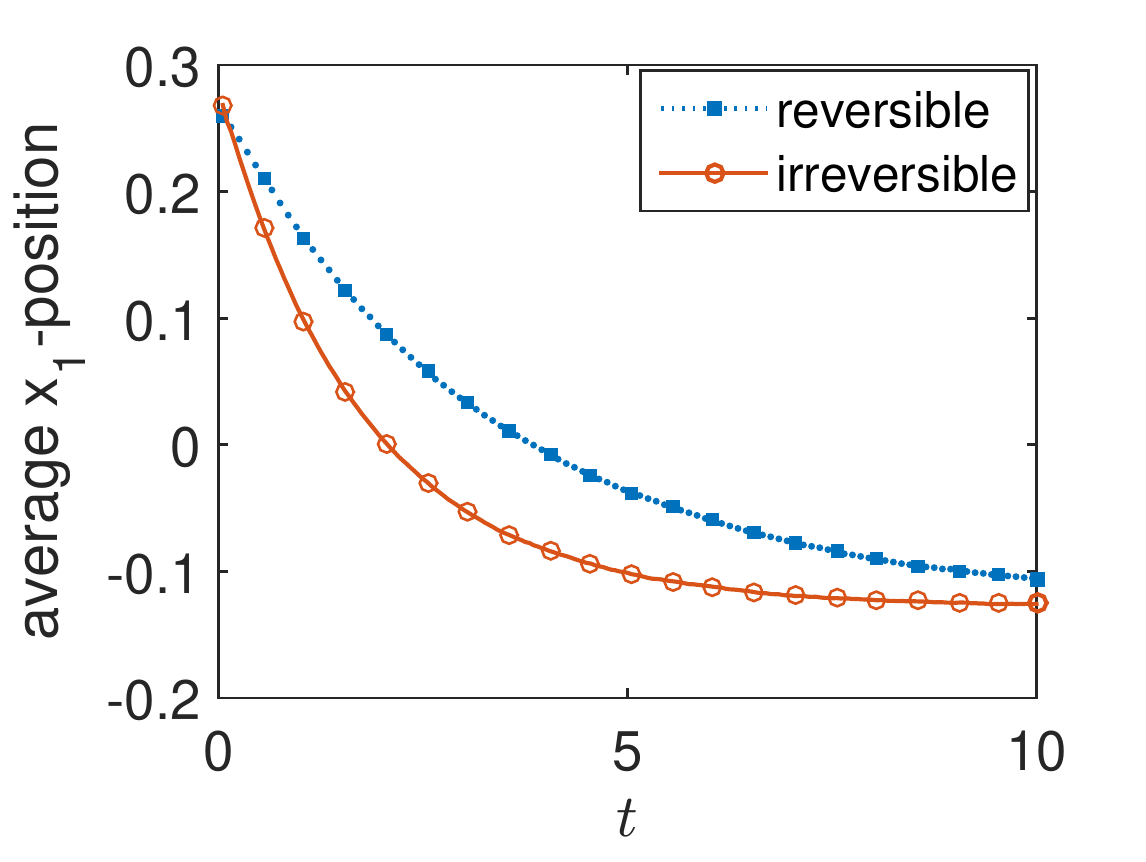}
  \begin{minipage}[b]{0.95\linewidth}
    \caption{\label{fig:ZRPobservables_super}ZRP with $g(k)=k^{3/2}$ and the particles are started in the local minimum. Left:
      Average energy $\hat V$. Right: average $x_1$-position.}
  \end{minipage}
\end{figure}

\begin{figure}
  \center
  \includegraphics[width=5.8cm]{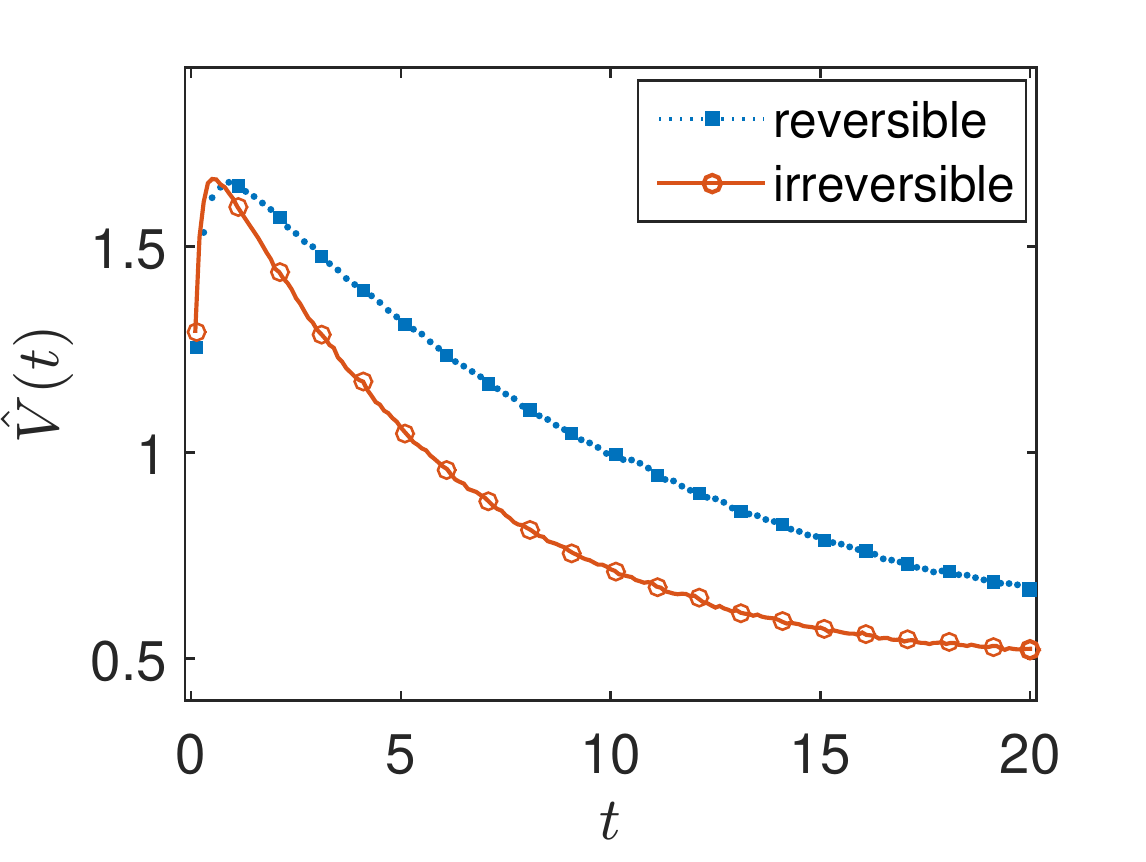}
  \includegraphics[width=5.8cm]{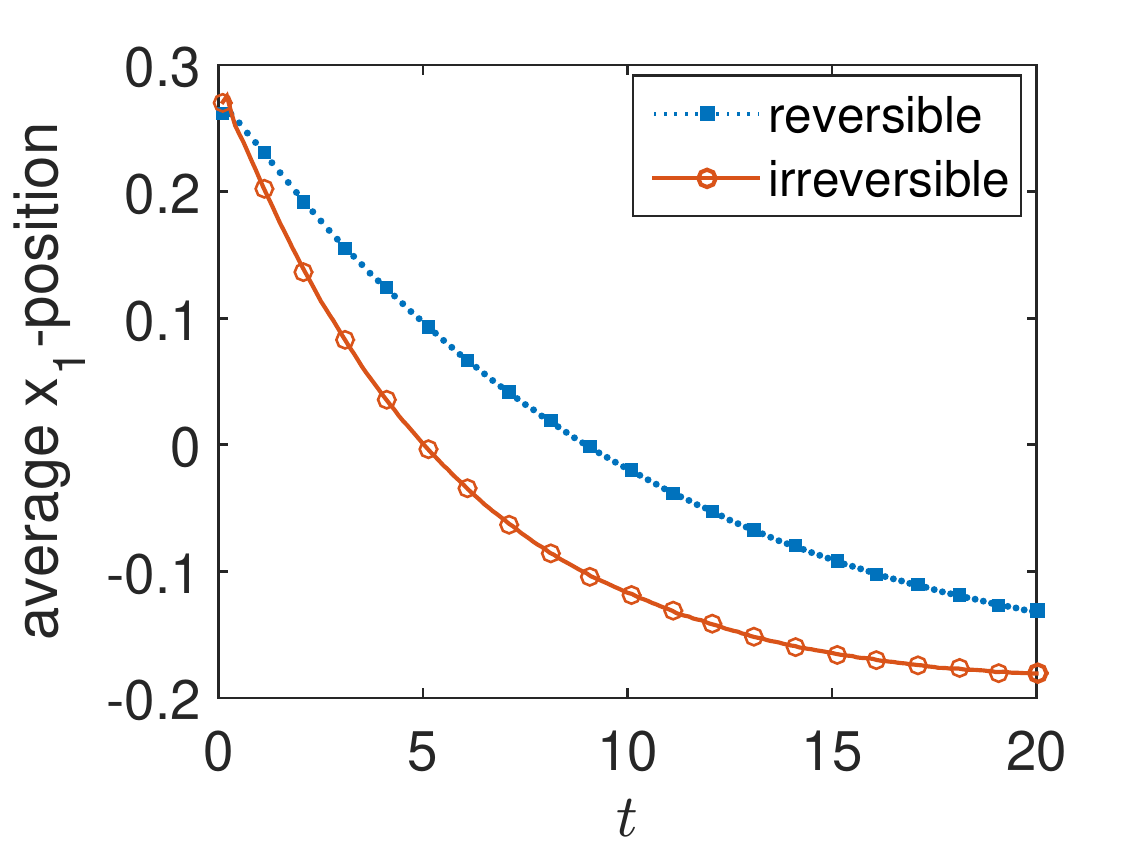}
  \begin{minipage}[b]{0.95\linewidth}
    \caption{\label{fig:ZRPobservables_sub}ZRP with $g(k)=k^{5/6}$ and the particles are started in the local minimum. We again plot the average energy $\hat V$ and the average $x_1$-position.}
  \end{minipage}
\end{figure}
 
We consider three different ZRPs, corresponding to different choices for $g(k)$. Firstly, we consider the linear case (independent
particles), where $g(k)=k$. We further consider the superlinear case with $g(k)=k^{3/2}$, such that the particles repel each other
(the hop rates away from site $x$ is increased when that site contains more particles). Finally we investigate the sublinear case
with $g(k) = k^{5/6}$ in which the particles prefer to cluster together. {For each setting, we simulated the process with both
reversible and irreversible dynamics, with $L^2/2=9\;\!800$ particles averaged over 16
simulations.} The potential, which is also depicted in Figure~\ref{fig:levelset}, is for shifted coordinates
$x=(x_1,x_2)\in[-1/2,1/2]^2$ given by
\begin{equation}\label{eqn:2dpotential}
  V(x_1,x_2) = A(x_1^2 - B)^2 + C x_2^2 + D x_1
\end{equation}
with a cut-off at a given height {$V^*$. For the simulations we chose the parameters $A=500$, $B=0.085$, $C=30$, $D=2.5$ and $V^*=5$ (that is, the potential used is ${\rm max}(V(x_1,x_2),V^*)$).} The parameter in~\eqref{eqn:drivingforce}, which sets
the strength of the non-gradient term of the driving force, was set to $a=0.4$. {This value is again close to the maximal allowed value (which is slightly above $0.405$).}

For all simulations, the particles start at position $(0.5,0.75)\in[0,1]^2$ close to the local minimum of the double well
potential. The particles then try to leave this well and move to the global minimum (on the left) as can be seen in the plots in
Fig.~\ref{img:timeevolution1} for the linear case. The test observables for the linear/superlinear/sublinear case can be found
in Figs.~\ref{img:2dindependent}, \ref{fig:ZRPobservables_super} and \ref{fig:ZRPobservables_sub}, respectively. Depending on the
chosen configuration, the simulation time on a HPC node with 16 cores using Matlab took between 10 and 13.5 hours.

As in the one-dimensional case, the particles are under the irreversible dynamics able to leave the minimum faster than it is the
case for reversible dynamics (compare the bottom row with the top row in Fig.~\ref{img:timeevolution1}).

\begin{table}
  \center
  \begin{tabular}{ l |l |l}
    $t_s$ & $\Delta\hat V$ & $\Delta x_1$ \\
    \hline
    $g(k) = k$ & 2.38  &  1.30 \\
    $g(k) = k^{3/2}$ & 1.11 & 0.58 \\
    $g(k) = k^{5/6}$ & 3.55 &  2.03\\
  \end{tabular}
  \hspace{36pt}
  \begin{tabular}{ l |l |l}
    $t_s/t_a$ & $\Delta\hat V$ & $\Delta x_1$ \\
    \hline
    $g(k) = k$ & 1.83  & 1.79 \\
    $g(k) = k^{3/2}$ & 1.78  & 1.86 \\
    $g(k) = k^{5/6}$ & 1.77 & 1.80 \\
  \end{tabular}
  \caption{Table of the absolute times $t_s$ for the reversible process (left) and ratios between times of the reversible and 
    irreversible process $t_s/t_a$ (right) to reach the distances $\Delta\hat V=0.3$ and $\Delta x_1=0.2$, respectively.}
  \label{table:times}
\end{table}

\subsubsection{Two dimensional case -- discussion}

{We close this section with Table~\ref{table:times}, which quantifies the acceleration in the models where particles attract, repel, or have no interactions.} For this, we consider
the average energy $\hat V$ and the average $x_1$ position of the particles. Assuming that the final values of these observables in  the irreversible
simulations are close to their steady-state values, we consider the distance $\Delta \hat V$ (resp. $\Delta x_1$) of both the
reversible and irreversible process and keep track of the first time where the distance is below a given threshold. Denoting this
time for the reversible process with $t_s$ and for the irreversible process with $t_a$, we can use the ratio $t_s/t_a$ as an
estimator for the acceleration.

From the data in the table, on sees that the processes are typically accelerated by factors about $1.75$ independent of the choice
of $g(k)$. We checked different thresholds (here we displayed $\Delta\hat V=0.3$ and $\Delta x_1=0.2$) which all 
lead to the same conclusions.  

{These are significant accelerations,  although considerably less than the dramatic speedup of order $10$ observed in one dimension.  However, the physical mechanisms for the acceleration are different in the two cases.  In one dimension, the drift forces which act to push particles up and over the barrier, so the forces are very large at the top of the barrier.  In two dimensions, the effect is more subtle: returning to Fig.~\ref{fig:levelset} and recalling that the drift force in (\ref{eqn:drivingforce}) is obtained by a  rotation of the potential gradient, one sees that in the vicinity of the saddle point of the potential, there is a net drift to the left in the top part of Fig.~\ref{fig:levelset}(b), and a drift to the right in the bottom part.  A natural analogy is a gentle stirring motion that happens in the vicinity of the saddle point, and tends to accelerate mixing.  This seems a much more plausible mechanism for accelerating convergence to equilibrium in practical situations, compared with the large forces required in one dimension.

Finally, we note that transport between the minima of a non-convex potential energy always involves a slow time scale proportional to ${\rm e}^{\Delta V}$, since a particle must still reach the barrier in order to cross it, and the probability that a particle visits the barrier is proportional to ${\rm e}^{-\Delta V}$.  However, the results here show that mixing of particles between energy minima can be accelerated by enhancing the probability that if a particle reaches a region with high $V$, it takes advantage of this excursion in order to cross the barrier.  The mechanisms for this enhanced probability differ between the models considered here -- it would be interesting to investigate this effect further, so as to understand how general these mechanisms are and how they can be exploited in practical applications.%
}

\section{Conclusion and Outlook}
\label{sec:conc}

We have considered interacting particle systems described by Markov chains, and their hydrodynamic limits, as described by
macroscopic fluctuation theory.  We compare reversible and irreversible processes: for an irreversible system with generator
$\LL$, the corresponding reversible process is the one identified in~\eqref{equ:LSA}, whose generator is $\LL_S$.  At the
microscopic level, it is known that the irreversible process then converges to its steady state at least as fast as the reversible
one -- this can be demonstrated by considering either the spectral gap or the (level-2) large deviations of the empirical measure.
In the hydrodynamic limit, Eq.~\eqref{equ:i2-faster-mft} shows that this property is preserved, by considering the large
deviations of the empirical density.  Moreover, Eq.~\eqref{eqn:LDPfunctional2} gives a quantitative expression for the
acceleration of convergence, which may be seen as a generalisation of previous results for single-particle
diffusions~\cite{Rey-Bellet2015a}.

Our numerical results for the ZRP reinforce the observation that for a given reversible system, there is a large family of
irreversible systems for which convergence to equilibrium is faster (or, at least, equally fast).  We considered two cases: either
a drift force in a single direction, which acts to drive a system around a circle (Sec.~\ref{sec:1d}) or the introduction of a
force that drives the system around the level sets of the potential (Sec.~\ref{sec:2d}).  In both cases, we observe acceleration
of convergence, as expected.

The results within MFT provide a geometrical interpretation of the acceleration, in terms of forces that act in directions
perpendicular to the free energy gradient, as shown by orthogonality relations for currents such as Eq.~\eqref{equ:JAdecomp}.  We
have argued that such forces can act to accelerate convergence by driving the system away from regions where the free energy
gradient is shallow, in which cases reversible processes exhibit slow convergence.
 
We offer two perspectives on future application of these ideas. First, we have shown that breaking detailed balance generically
accelerates convergence, but of course there are very many ways to write down irreversible models, and it is not clear what choices
are most practical in applications, nor which ones lead to the fastest convergence.  In particular, the choice considered for ZRP
examples shown here are rather specific to systems in one or two dimensions. (We emphasise however that the configuration spaces
of the ZRP are very high-dimensional since we consider $N$ interacting particles, so the methods are not restricted to systems
with low-dimensional configuration spaces.)  Second, we gave a geometrical interpretation in which the symmetric dynamics correspond to the
gradient flow (steepest descent) of the free energy and the antisymmetric dynamics are in some sense orthogonal to this gradient
flow. This offers a potentially new perspective on hydrodynamic limits in irreversible systems, which it would be interesting to
investigate further, for example with a view towards obtaining analytic estimates for the rate of convergence. 

\vspace{4pt}
Supporting data for this manuscript and the code used for the simulations will be made available short after publication on the University of Bath data archive [DOI to be added].\\

{\bf Acknowledgements:}
  MK is supported by a scholarship from the EPSRC Centre for Doctoral Training in Statistical Applied Mathematics at Bath (SAMBa),
  under the project EP/L015684/1.  JZ gratefully acknowledges funding by the EPSRC through project EP/ K027743/1, the Leverhulme
  Trust (RPG-2013-261) and a Royal Society Wolfson Research Merit Award. This research made use of the Balena High Performance
  Computing (HPC) Service at the University of Bath. { The authors thank the anonymous referees for very helpful comments
    and suggestions.}

%% end paste

% BibTeX users please use one of
%\bibliographystyle{spbasic}      % basic style, author-year citations
\bibliographystyle{spmpsci-jz}      % mathematics and physical sciences
\def\cprime{$'$} \def\cprime{$'$} \def\cprime{$'$}
  \def\polhk#1{\setbox0=\hbox{#1}{\ooalign{\hidewidth
  \lower1.5ex\hbox{`}\hidewidth\crcr\unhbox0}}} \def\cprime{$'$}
  \def\cprime{$'$}

\end{document}